\documentclass[10pt,journal,comsoc]{IEEEtran}
\usepackage{graphicx} % Required for inserting images
\usepackage{graphics}
\usepackage{subfigure}
\usepackage{multirow}
\usepackage{makecell}
\usepackage{textcomp}
\usepackage{array}
\usepackage{xcolor}

\usepackage{tabularx}
\newcolumntype{Y}{>{\centering\arraybackslash}X} % 定義一個置中且自動換行的欄位格式 Y

\usepackage{tikz}
\usetikzlibrary{shapes.geometric, arrows, positioning, calc}

\usepackage{amsmath,amssymb,amsfonts}
\usepackage{bm}

\usepackage{algorithm}
\usepackage{algorithmic}
\usepackage[algo2e,linesnumbered,ruled]{algorithm2e}

\usepackage{amsthm}
\theoremstyle{definition}

\newtheorem{thm}{Theorem}

\SetKwProg{Fn}{Function}{}{end}

\SetAlgoSkip{}

\newcolumntype{M}[1]{>{\centering\arraybackslash}m{#1}}

\DeclareSymbolFont{symbolsC}{U}{txsyc}{m}{n}
\DeclareMathSymbol{\notniFromTxfonts}{\mathrel}{symbolsC}{61}

\title{SCOPE: Deterministic and Training-Free 3D UAV Deployment via Perimeter-based Heuristics}
\author{Chuan-Chi~Lai,~\IEEEmembership{Member,~IEEE}%,
    \IEEEcompsocitemizethanks{
        \IEEEcompsocthanksitem{This research was supported by the National Science and Technology Council, Taiwan, under Grant No. NSTC 114-2221-E-194-062-. 
		This work was also partially supported by the Advanced Institute of Manufacturing with High-tech Innovations (AIM-HI) from the Featured Areas Research Center Program within the framework of the Higher Education Sprout Project by the Ministry of Education (MOE) in Taiwan. \emph{(Corresponding author: Chuan-Chi~Lai.)}}
        \IEEEcompsocthanksitem{C.-C. Lai is with the Department of Communications Engineering, National Chung Cheng University, Minxiong Township, Chiayi County 621301, Taiwan, and also with the Advanced Institute of Manufacturing with High-tech Innovations (AIM-HI), National Chung Cheng University, Minxiong Township, Chiayi County 621301, Taiwan (e-mail: chuanclai@ccu.edu.tw).}
    }
}
\IEEEtitleabstractindextext{%
\begin{abstract}
Unmanned Aerial Vehicle (UAV) mounted Base Stations (UAV-BSs) provide flexible coverage for temporary hotspot scenarios; however, efficiently optimizing 3D deployment to satisfy heterogeneous user distributions remains a significant challenge. While Deep Reinforcement Learning (DRL) approaches have shown promise, they often suffer from prohibitive training overhead and poor generalization in cold-start scenarios where the user topology is unknown a priori. To address these limitations, this paper proposes Satisfaction-driven Coverage Optimization via Perimeter Extraction (SCOPE), which is a deterministic and training-free 3D deployment framework. Unlike existing heuristics that rely on fixed-altitude assumptions, SCOPE integrates a perimeter-based peeling strategy with the Welzl Smallest Enclosing Circle (SEC) algorithm to dynamically optimize 3D positions. Theoretically, we provide a rigorous convergence proof and derive a polynomial time complexity of $O(N^2 \log N)$, ensuring predictable execution for real-time applications. Experimentally, we evaluate SCOPE in unpredictable hotspot environments against both traditional heuristics and state-of-the-art DRL baselines under a matched hardware budget. Simulation results demonstrate that SCOPE maintains a high user satisfaction rate between 82\% and 88\% while generating solutions within millisecond-level latency on commodity hardware. Furthermore, SCOPE demonstrates exceptional resilience by maintaining an approximate 40\% functional coverage rate at a minimum altitude constraint of 60 m; in this challenging regime, baseline methods suffer a significant performance degradation, dropping to approximately 20\% due to altitude-induced path loss. These findings validate SCOPE as a robust and agile solution for establishing instantaneous digital lifelines in zero-day disaster response missions.
\end{abstract}

\begin{IEEEkeywords}
UAV communications, heterogeneous user density, perimeter extraction, 3D deployment, user satisfaction, smallest enclosing circle
\end{IEEEkeywords}}

\begin{document}

\maketitle
% To allow for easy dual compilation without having to reenter the
% abstract/keywords data, the \IEEEtitleabstractindextext text will
% not be used in maketitle, but will appear (i.e., to be "transported")
% here as \IEEEdisplaynontitleabstractindextext when the compsoc 
% or transmag modes are not selected <OR> if conference mode is selected 
% - because all conference papers position the abstract like regular
% papers do.
\IEEEdisplaynontitleabstractindextext
% \IEEEdisplaynontitleabstractindextext has no effect when using
% compsoc or transmag under a non-conference mode.

% For peer review papers, you can put extra information on the cover
% page as needed:
% \ifCLASSOPTIONpeerreview
% \begin{center} \bfseries EDICS Category: 3-BBND \end{center}
% \fi
%
% For peerreview papers, this IEEEtran command inserts a page break and
% creates the second title. It will be ignored for other modes.
\IEEEpeerreviewmaketitle

\section{Introduction}
\label{sec:intro}

\IEEEPARstart{W}{ith} the rapid evolution of wireless communication technologies beyond 5G (B5G) and 6G, the demand for ubiquitous and high-capacity connectivity has surged exponentially~\cite{6736746}. While terrestrial base stations (BSs) serve as the backbone of cellular networks, their fixed deployment often lacks the flexibility to cope with sudden traffic spikes caused by temporary hotspots, such as large-scale outdoor gatherings or disaster relief operations. In this context, \textit{Unmanned Aerial Vehicles} (UAVs) equipped with communication payloads, also known as \textit{UAV-mounted Base Stations} (UAV-BSs) or \textit{Drone Small Cells} (DSCs), recognized as a key enabler for \textit{Non-Terrestrial Networks} (NTN) in the 6G era, have emerged as a promising solution~\cite{7470933,8918497,9275613}. Leveraging their high mobility, flexible 3D maneuverability, and superior \textit{Line-of-Sight} (LoS) probabilities, UAV-BSs can be dynamically deployed to enhance coverage~\cite{7417609,10529150}, \textit{Quality of Service} (QoS)~\cite{7510820,8642333,9177297}, and energy efficiency~\cite{10531095,JIANG202219} on demand. 

Despite the potential of UAV-assisted networks, the efficient 3D placement of UAV-BSs remains a critical optimization challenge, particularly in environments with highly heterogeneous user distributions. Traditional deployment strategies often simplify the problem by assuming uniform densities or constraining UAVs to a fixed flight altitude~\cite{alhourani2014optimal,7486987}. For instance, the conventional heuristic, \textit{Counter-Clockwise Spiral} (CCS), typically positions UAVs from the boundary inward utilizing a fixed coverage radius~\cite{7762053}. However, such rigid approaches fail to adapt to the irregular and clustered nature of real-world crowds, leading to suboptimal coverage and load imbalances.

Recently, data-driven approaches have evolved into highly complex architectures. For instance, architectures such as \textit{Graph Attention Networks} (GAT) enhanced \textit{Deep Reinforcement Learning} (DRL) have been proposed for joint UAV placement~\cite{11224636}, while Transformer-based models have been integrated to capture complex spatial-temporal dependencies~\cite{10654340}. 
%\color{blue}
While these learning-based methods excel in steady-state environments, they encounter significant convergence delays in cold-start scenarios. In the immediate aftermath of a disaster, the communication environment is characterized by extreme uncertainty and rapid topological shifts. While trained DRL models offer fast inference, they often suffer from the \textit{distribution shift} caused by altered user patterns, leading to severe policy degradation~\cite{lai2026stcl}. Adapting such models to an unseen disaster topology typically necessitates a fine-tuning process exceeding $10^6$ milliseconds of high-end computation, which introduces intolerable delays in time-critical rescue missions~\cite{10531095, lai2026stcl}. This motivates the design of a deterministic, training-free deployment framework that can establish an initial digital lifeline within milliseconds.

%\color{black}
In contrast, optimization-based approaches remain vital for their stability. Recent research has demonstrated the robustness of geometric placement optimization in multi-UAV networks~\cite{10423082}. For instance, geometric constraints such as the angle-of-radiation have been exploited to optimize 3D connectivity~\cite{11235946}, reinforcing that mathematical methods are still indispensable. While recent advances have incorporated capacity constraints into geometric partitioning, most existing methods lack a unified framework to jointly optimize 3D position, altitude, and user association under strict capacity limits.

%\color{blue}
To address these limitations, this paper proposes a deterministic and training-free 3D deployment framework named \textit{Satisfaction-driven Coverage Optimization via Perimeter Extraction} (SCOPE). Unlike data-driven models, SCOPE adopts a computational geometry approach to iteratively peel the deployment area. By integrating a perimeter extraction mechanism with the \textit{Smallest Enclosing Circle} (SEC) algorithm~\cite{welzl1991smallest}, SCOPE dynamically determines the optimal 3D position and altitude for each UAV-BS in polynomial time. This framework allows UAVs to fly at lower altitudes in dense clusters to mitigate interference, while ascending to higher altitudes in sparse areas to maximize coverage, effectively tailoring the deployment to the user distribution without requiring any prior training.

The main contributions of this paper are summarized as follows:
\begin{itemize}
    \item \textbf{SCOPE Framework}: We propose SCOPE, a training-free geometric heuristic for snapshot-based initial 3D deployment that jointly optimizes horizontal position and altitude. The algorithm utilizes an outside-in peeling strategy to prioritize boundary coverage, which ensures robust connectivity in heterogeneous environments.
    \item \textbf{Theoretical Analysis}: We provide a theoretical analysis of the framework, formally proving its convergence and deriving a polynomial time complexity of $O(N^2 \log N)$. This deterministic guarantee enables a fast-recalculation strategy for zero-day scenarios, contrasting with the stochastic instability and heavy training burden of DRL approaches.
    \item \textbf{Tri-Constraint Feasibility}: We introduce a capacity-aware pruning mechanism that strictly enforces three simultaneous constraints, which consist of individual backhaul capacity, the operational altitude floor, and minimum QoS requirements. This logic ensures that each deployed UAV-BS maintains a feasible and saturated load.
    \item \textbf{Performance Evaluation}: We conduct extensive simulations against geometric and learning-based baselines under a strictly matched hardware budget. Results demonstrate that SCOPE maintains a superior User Satisfaction Rate (USR) between 82\% and 88\% while generating solutions in approximately 185.35 ms. This performance lead confirms SCOPE's unique suitability for establishing robust digital lifelines in complex topographies where pre-training is unavailable and physical obstacles are prevalent.
\end{itemize}

\color{black}
The remainder of this paper is organized as follows. Section \ref{sec:related_work} reviews related work. Section \ref{sec:system_model} describes the system model and problem formulation. Section \ref{sec:algorithm} details the proposed SCOPE algorithm. Section \ref{sec:simulation} presents the simulation results, and Section \ref{sec:conclusion} concludes the paper.

\section{Related Work}
\label{sec:related_work}

The deployment of UAV-BSs has been extensively studied to enhance wireless coverage and capacity. Existing literature can be broadly categorized into static heuristics, numerical optimization, and learning-based strategies. Table \ref{tab:method_comparison} provides a qualitative comparison of these approaches.

\begin{table*}[!t]
%\color{blue}
\setlength{\tabcolsep}{5pt}
\caption{Comparison of UAV Deployment Approaches}
\label{tab:method_comparison}
\centering
\begin{tabular}{|l|c|c|c|c|}
\hline
\textbf{Feature} & \textbf{\makecell{Spiral/Voronoi \\ \cite{7762053,zhao2018research,10221731}}} & \textbf{\makecell{Meta-Heuristics \\ \cite{10529150,8642333,10361533}}} & 
\textbf{\makecell{SOTA DRL \\ \cite{11224636,10654340,10483540,2602.09994v1}}} & \textbf{SCOPE} \\
\hline
\hline
\textbf{Target Scenario} & Uniform low-density & Delay-tolerant offline planning & Steady-state continuous tracking & Zero-day cold-start emergency \\
\hline
\textbf{Capacity Awareness} & Blind (Pure geometric) & Iterative penalty & Reward-driven & Strict (Tri-constraint) \\
\hline
\textbf{Core Mechanism} & Geometry / Partitioning & Iterative Search & Neural Network & Perimeter Extraction \\
\hline
\textbf{Fleet Sizing} & Fixed (Pre-set) & Fixed (Pre-set) & Fixed (NN Constrained) & Dynamic (On-demand) \\
\hline
\textbf{3D Optimization} & Partial (Sequential) & Yes & Yes & Yes (Joint 3D) \\
\hline
\textbf{Training Cost} & None & None & High & None \\
\hline
\textbf{Time Complexity} & Low / Medium & High & High (Training) & Low ($O(N^2 \log N)$) \\
\hline
\textbf{Generalization} & High & Medium & Low (Re-training) & High (Training-free) \\
\hline
\textbf{Solution Type} & Deterministic / Iterative & Stochastic & Stochastic & Deterministic \\
\hline
\end{tabular}
\end{table*}

%\color{blue}
\subsection{Geometric and Static Heuristics}
Early research focused on establishing probabilistic \textit{Air-to-Ground} (AtG) channel models. It was demonstrated in~\cite{alhourani2014optimal} that an optimal UAV altitude exists to maximize the coverage radius. For multi-UAV scenarios, geometric patterns are commonly employed. The CCS deployment algorithm~\cite{7762053} places UAVs from the boundary inward. While computationally efficient, CCS assumes fixed coverage radii, leading to coverage voids in heterogeneous environments. Voronoi-based approaches~\cite{zhao2018research,10653059,10901041} partition the service area to improve fairness. Recently, advanced geometric methods such as capacity-constrained power diagrams have been proposed to explicitly address the load balancing issue by dynamically adjusting cell weights~\cite{10221731}. However, while these methods may adjust UAV altitudes to improve signal strength (SNR), they typically treat the optimization of each cell independently or neglect the aggregate inter-cell interference (SINR) in their objective function. This limitation can lead to aggressive altitude configurations that maximize local coverage but degrade global network performance due to severe co-channel interference. In contrast, SCOPE determines the UAV altitude based on the SEC, which inherently minimizes the coverage radius and the resulting interference footprint, ensuring strict QoS satisfaction in interference-limited 3D environments.

%\color{black}
\subsection{Numerical and Meta-Heuristic Optimization}
Traditional numerical methods often formulate UAV placement as \textit{Mixed-Integer Non-Linear Programming} (MINLP) problems. To address the complexity of MINLP, various iterative algorithms have been developed. For instance, a proximal stochastic gradient descent algorithm was recently proposed in~\cite{9763515} to jointly optimize fairness and energy efficiency in cellular networks. While such gradient-based methods are theoretically sound, they typically require careful hyperparameter tuning (e.g., step sizes) and suffer from slow convergence in large-scale networks. In contrast, heuristic algorithms like K-Means clustering and Voronoi tessellation~\cite{7486987} offer lower complexity but often rely on center-based assumptions that fail to cover boundary users effectively. Unlike these iterative or center-based approaches, SCOPE leverages a deterministic geometric construction (SEC) to guarantee optimal coverage radius without the need for gradient steps or extensive iterations.

Alternatively, meta-heuristic algorithms have been adopted to find near-optimal solutions. For instance, \textit{Particle Swarm Optimization} (PSO) was recently employed to optimize the energy-efficient deployment of VLC-enabled UAVs~\cite{10361533}. Similarly, a \textit{Genetic Algorithm} (GA) base solution was proposed to solve the 3D placement problem for maximizing communication coverage~\cite{10529150}. Additionally, a density-aware placement strategy based on GA were investigated in~\cite{8642333} to guarantee data rate requirements. 

While these meta-heuristics offer better flexibility than rigid geometric patterns and can handle non-differentiable objectives, they fundamentally suffer from slow convergence rates and are prone to getting trapped in local optima. As noted in~\cite{10529150}, the iterative nature of GA requires massive population updates, making it computationally expensive for real-time applications requiring millisecond-level response times. Parallel to AI, geometric optimization continues to evolve. For instance, a robust multi-UAV placement problem was recently formulated using iterative optimization algorithms to guarantee localization performance in cooperative networks~\cite{10423082}. This validates our motivation, as mathematical optimization often provides better theoretical guarantees and efficiency than stochastic search heuristics for static or snapshot placement problems.

%\color{blue}
\subsection{Learning-based Deployment Strategies}
With the advent of AI, data-driven methods have become a dominant paradigm, evolving from traditional DRL to highly complex architectures. While DRL agents can learn policies for joint trajectory and power control~\cite{10666852}, more sophisticated \textit{Multi-Agent Reinforcement Learning} (MARL) frameworks have been developed to enhance stability in dynamic environments. A representative example is the ORCHID framework~\cite{2602.09994v1}, which introduces a \textit{Reset-and-Finetune} (R\&F) mechanism within the \textit{Multi-Agent Proximal Policy Optimization} (MAPPO) architecture to suppress gradient variance while utilizing a \textit{Ground Base Station} (GBS)-aware topology partitioning strategy to mitigate the exploration cold-start problem.

While these advanced models are exceptionally powerful for continuous trajectory tracking and long-term network management, they face fundamental challenges in emergency, cold-start deployments where the post-disaster environment deviates significantly from pre-training datasets~\cite{10531095,Dulac2021Challenges}. Although a pre-trained DRL model offers rapid inference times, the drastic distribution shift in user topology and radio propagation caused by disasters often leads to catastrophic forgetting and severe policy degradation~\cite{lai2026stcl,2602.09994v1}. In such cases, adapting the model through fine-tuning or retraining is not only time-consuming but also lacks guaranteed convergence within the critical initial minutes of a rescue operation~\cite{10531095}. To complement these learning-based strategies, our proposed SCOPE framework adopts a training-free, demand-driven approach tailored for this immediate deployment phase. By utilizing computational geometry to derive snapshot deployment solutions directly without any historical dependency~\cite{10423082}, SCOPE provides both computational agility and deterministic performance, serving as a robust baseline for real-time missions where the overhead and uncertainty of retraining are unacceptable.

\section{System Model and Problem Formulation}
\label{sec:system_model}

We consider a downlink UAV-assisted wireless network deployed in a post-disaster emergency scenario where the terrestrial infrastructure is severely compromised. The system operates within a $D \times D$ target area. As depicted in Fig.~\ref{fig:system_model}, a fleet of UAV-BSs, bounded by a maximum available budget $K$ and denoted by the index set $\mathcal{K}=\{1, \dots, K\}$, is dynamically deployed to establish a communication lifeline for $N$ arbitrarily distributed ground users (GUs), denoted by the index set $\mathcal{G}=\{1, \dots, N\}$. To ensure reliable service, this aerial deployment must adapt to the heterogeneous user clusters while strictly adhering to the physical and system limitations, including the operational altitude boundaries (between $h_{\min}$ and $h_{\max}$), the minimum data rate requirement ($R_{\min}$) for each served user, and the total backhaul capacity constraint ($C_{\text{backhaul}}$) connected to the core network.

\subsection{Heterogeneous User Distribution and Geometric Configuration}
To address the snapshot-based initial deployment problem, we focus on the spatial optimization of the network. The 3D coordinate of the $j$-th UAV is denoted by $\mathbf{u}_j = (x_j, y_j, h_j)$, where $h_j$ is constrained between $h_{\min}$ and $h_{\max}$ to comply with aviation regulations. Let $\mathbf{U} = \{\mathbf{u}_1, \dots, \mathbf{u}_K\}$ denote the set of 3D coordinates for all deployed UAVs. The horizontal location of the $i$-th GU is denoted by $\mathbf{w}_i = (x_i, y_i, 0)$.

To capture the highly heterogeneous and irregular nature of crowds in emergency scenarios, such as survivors gathering along accessible paths or at scattered assembly points, we model the spatial distribution of GUs using a \textit{Spatial Branching Process} (SBP)~\cite{chiu2013stochastic}. Unlike traditional uniform models, the SBP generates hierarchical and non-linearly shaped clusters through a recursive stochastic mechanism. The process begins with a set of initial cluster centers, where each point acts as a parent node that generates a random number of offspring according to a Poisson distribution with mean $\lambda_{\mathrm{off}}$. These offspring are then spatially scattered around their respective parent nodes based on a 2D Gaussian distribution $\mathcal{N}(\mathbf{0}, \sigma_{\text{s}}^2 \mathbf{I})$, where $\sigma_{\text{s}}$ denotes the spatial dispersion parameter.

Crucially, in this recursive branching model, each newly generated offspring can further serve as a parent for the subsequent generation. This multi-generational propagation continues until the branching process terminates, resulting in complex, multi-centered, and often elongated cluster topologies that more accurately reflect the unpredictable distribution of users in a real-world disaster zone. The total number of users $N$ is obtained by sampling from the final realization of this branching process. This severe spatial heterogeneity often results in significant coverage voids and load imbalances, necessitating a robust 3D optimization framework to strategically position the UAV fleet.

\begin{figure}[!t]
    %\color{blue}
    \centering
    \includegraphics[width=.975\columnwidth]{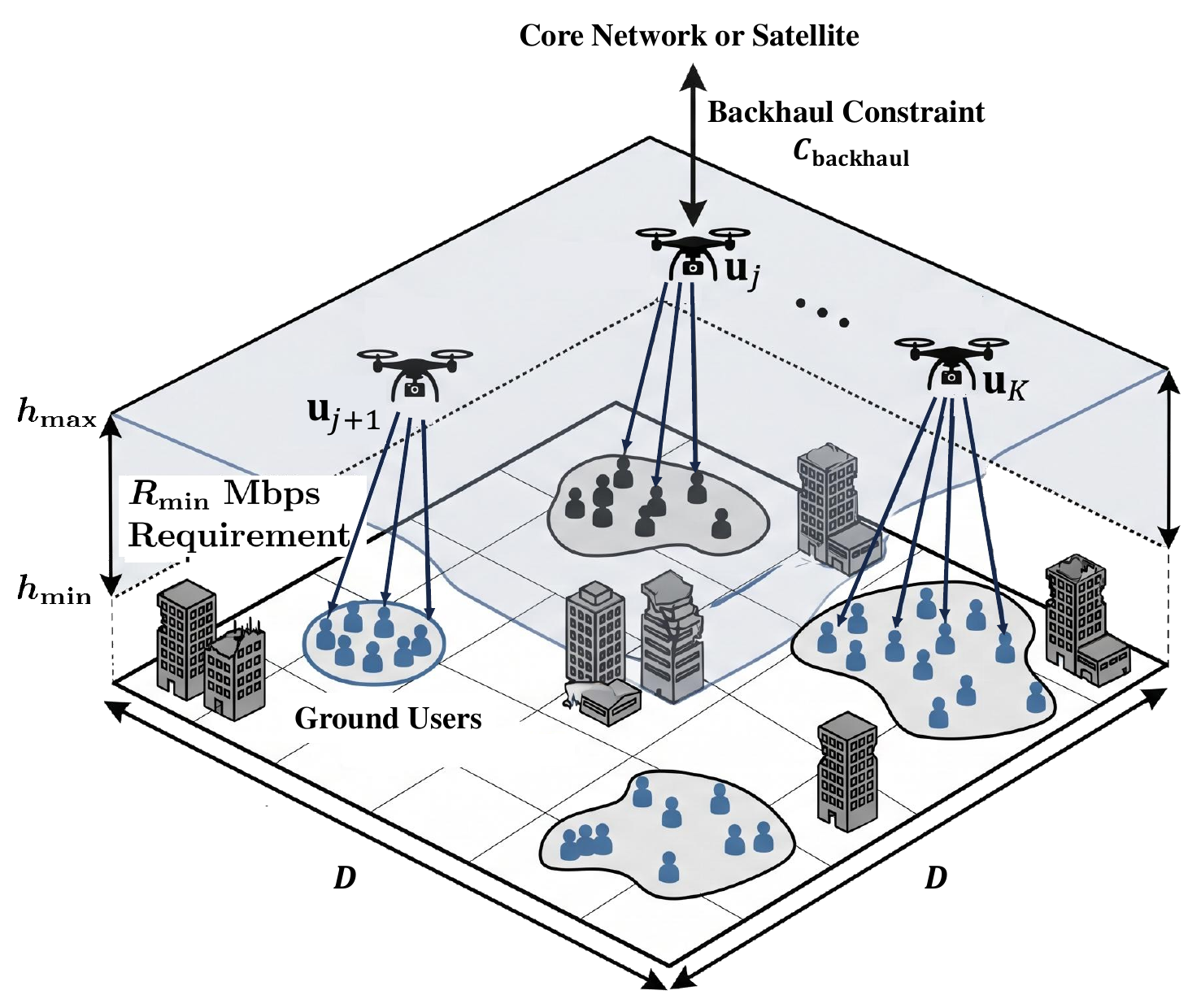}
    \caption{System model of the 3D UAV-BS deployment in a post-disaster scenario. The UAVs dynamically adjust their positions to serve heterogeneous ground user clusters while satisfying the operational altitude boundaries ($h_{\min}$ and $h_{\max}$), individual data rate requirement ($R_{\min}$), and the backhaul capacity constraint ($C_{\text{backhaul}}$) connected to the core network.}
    \label{fig:system_model}
\end{figure}

%\color{black}

\subsection{Air-to-Ground Channel Model}
The communication links are modeled using the probabilistic \textit{Air-to-Ground} (AtG) path loss model~\cite{alhourani2014optimal}. The probability of establishing a \textit{Line-of-Sight} (LoS) link between UAV $j$ and GU $i$ is given by the sigmoid function:
\begin{equation}
\label{eq:prob_los}
    P_{LoS}(\theta_{i,j}) = \frac{1}{1 + a \cdot \exp(-b(\theta_{i,j} - a))},
\end{equation}
where $a$ and $b$ are environment-dependent constants (e.g., urban, dense urban), $\theta_{i,j} = \frac{180}{\pi} \sin^{-1}\left(\frac{h_j}{d_{i,j}}\right)$ is the elevation angle, and $d_{i,j} = \|\mathbf{u}_j - \mathbf{w}_i\|$ representing the Euclidean distance. The average path loss is formulated as:
\begin{equation}
\label{eq:avg_pathloss}
    \bar{L}_{i,j} = P_{LoS} L_{LoS} + (1 - P_{LoS}) L_{NLoS},
\end{equation}
where the path loss for LoS and NLoS links are expressed as:
\begin{equation}
\label{eq:pathloss_components}
    L_{\xi} = 20\log\left(\frac{4\pi f_c d_{i,j}}{c}\right) + \eta_{\xi}, \quad \xi \in \{LoS, NLoS\},
\end{equation}
where $\eta_{\xi}$ representing the excessive attenuation factor.

%\color{blue}

\subsection{Communication and Interference Model}
We assume all UAV-BSs transmit with a constant power $P_t$. The received power at GU $i$ from UAV $j$ is $P_{r, i,j} = P_t \cdot 10^{-\bar{L}_{i,j}/10}$. In a high-density deployment scenario, co-channel interference from neighboring UAVs significantly impacts performance. The \textit{Signal-to-Interference-plus-Noise Ratio} (SINR) for GU $i$ associated with UAV $j$ is calculated as:
\begin{equation}
\label{eq:sinr}
    \Gamma_{i,j} = \frac{P_{r, i,j}}{\sum_{k \in \mathcal{U} \setminus \{j\}} P_{r, i,k} + \sigma^2},
\end{equation}
where $\sigma^2$ denotes the power of \textit{Additive White Gaussian Noise} (AWGN). The achievable data rate is then given by Shannon's capacity formula:
\begin{equation}
\label{eq:data_rate}
    R_{i,j} = B_{i,j} \log_2(1 + \Gamma_{i,j}),
\end{equation}
where $B_{i,j}$ is the bandwidth allocated to GU $i$.

\subsection{Power Consumption and Energy Efficiency}
Energy efficiency is a critical metric for battery-constrained UAVs. The total power consumption of a UAV-BS typically consists of communication power and propulsion (hovering) power. Since the mechanical power required to maintain altitude significantly dominates the circuit power, we adopt a simplified power model:
\begin{equation}
\label{eq:power_consumption}
    P_{\text{total}, j} \approx P_{\text{hover}} + P_{t},
\end{equation}
where $P_{t}$ is the transmission power and $P_{\text{hover}}$ is a constant representing the average hovering power consumption (e.g., based on rotor dynamics). Consequently, the system's \textit{Energy Efficiency} (EE) is defined as the total network throughput divided by the aggregate power consumption:
\begin{equation}
\label{eq:ee}
    \eta_{\text{EE}} = \frac{\sum_{j=1}^K \sum_{i=1}^N \mu_{i,j} R_{i,j}}{\sum_{j=1}^K (P_{\text{hover}} + P_{t}) \cdot \mathbb{I}(j)},
\end{equation}
where $\mathbb{I}(j)$ is an indicator function that equals 1 if the $j$-th UAV is active (deployed) and 0 otherwise. This formulation highlights that reducing the number of active UAVs is the most effective way to improve energy efficiency.

\subsection{Problem Formulation}
Let $\mathbf{M} = [\mu_{i,j}]_{N \times K}$ denote the user association matrix, where $\mu_{i,j} \in \{0,1\}$ is the binary association variable. Our objective is to maximize the User Satisfaction Rate (USR), $\mathcal{S}$, defined as the fraction of users whose data rate requirement $R_{\min}$ is met. We formulate the optimization problem for a given snapshot as follows:
\begin{subequations}
\label{eq:optimization_problem}
\begin{align}
    \max_{\mathbf{U}, \mathbf{M}} \quad & \mathcal{S} = \frac{1}{N} \sum_{i=1}^N \sum_{j=1}^K \mu_{i,j} \label{eq:objective_func}\\
    \text{s.t.} \quad & \sum_{j=1}^K \mu_{i,j} \le 1, \quad \forall i \in \mathcal{G}, \label{cons:association}\\
    & \sum_{i=1}^N \mu_{i,j} \le \gamma_{\max}, \quad \forall j \in \mathcal{K}, \label{cons:capacity}\\
    & h_{\min} \le h_j \le h_{\max}, \quad \forall j \in \mathcal{K}, \label{cons:altitude}\\
    & \mu_{i,j} R_{i,j} \ge \mu_{i,j} R_{\min}, \quad \forall i, j. \label{cons:qos}
\end{align}
\end{subequations}

The constraints of the proposed optimization problem are physically and operationally defined as follows. Constraint \eqref{cons:association} ensures that each ground user is associated with at most one UAV to avoid duplicate resource allocation. Constraint \eqref{cons:capacity} imposes a backhaul capacity limit on each UAV, where $\gamma_{\max} = \lfloor C_{\text{backhaul}} / R_{\min} \rfloor$ denotes the maximum number of users a single drone can simultaneously serve to prevent network congestion and hardware overload. Constraint \eqref{cons:altitude} restricts the flight altitude of each deployed UAV within the operational boundaries $h_{\min}$ and $h_{\max}$ to comply with aviation safety regulations. Finally, constraint \eqref{cons:qos} guarantees the required quality of service, ensuring that if user $i$ is associated with UAV $j$ (i.e., $\mu_{i,j} = 1$), the achievable data rate must meet or exceed the predefined threshold $R_{\min}$.

\subsection{Analysis of Problem Hardness}
\label{sec:hardness}
To justify the necessity of the proposed polynomial-time heuristic, we analyze the intractability of the MINLP problem formulated in \eqref{eq:optimization_problem} from three perspectives:
\begin{itemize}
    \item \textbf{Combinatorial Complexity:} The user association variable $\mu_{i,j}$ is binary. Even if the UAV locations $\mathbf{U}$ are fixed, the sub-problem of determining the optimal user association under capacity constraints \eqref{cons:capacity} and QoS constraints \eqref{cons:qos} resembles the \textit{Generalized Assignment Problem} (GAP), which is NP-hard. In the worst-case exhaustive search scenario, the potential solution space for associating $N$ users to $K$ UAVs scales as $O(K^N)$. For a dense urban scenario with $N=1000$ and $K=10$, the search space size becomes astronomical ($10^{1000}$), rendering brute-force approaches computationally intractable for real-time applications.
    \item \textbf{Non-Convexity of Continuous Optimization:} The continuous sub-problem of optimizing UAV coordinates $\mathbf{u}_j$ (for a fixed association $\mathbf{M}$) is non-convex. This arises primarily from the achievable data rate constraint \eqref{cons:qos}. The data rate $R_{i,j}$ defined in \eqref{eq:data_rate} is a logarithmic function of the SINR, which contains the Euclidean distance $d_{i,j} = \|\mathbf{u}_j - \mathbf{w}_i\|$ in the denominator of the path loss term. Specifically, the interference term $I_{\text{inter}} = \sum_{k \neq j} P_{r,i,k}$ in the denominator of the SINR expression \eqref{eq:sinr} makes the objective function and the feasible region highly non-convex. Consequently, standard convex optimization techniques (e.g., interior-point methods) cannot guarantee global optimality and may get trapped in local optima.
    \item \textbf{Variable Coupling:} The binary association variables $\mathbf{M}$ and continuous position variables $\mathbf{U}$ are intricately coupled in constraints \eqref{cons:qos} and \eqref{cons:capacity}. The optimal 3D position of a UAV depends on the set of users it serves (to minimize path loss), while the optimal set of served users depends on the UAV's position (to satisfy SINR thresholds). This mutual dependency prevents the problem from being decomposed into independent sub-problems. While iterative methods like \textit{Block Coordinate Descent} (BCD) can be applied, they typically suffer from slow convergence speeds and high computational overhead.    
\end{itemize}

In summary, the high complexity of the MINLP formulation prohibits the use of exact solvers (e.g., Branch-and-Bound) for online deployment. This theoretical bottleneck motivates the design of SCOPE, which relaxes the rigorous optimality for a deterministic, geometry-based solution with a worst-case complexity of only $O(N^2 \log N)$.

\color{black}

\subsection{Fairness Metric Definition}
In addition to maximizing user satisfaction, ensuring load balancing among UAV-BSs is crucial for preventing resource exhaustion and maintaining network stability. We verify the load fairness using \textit{Jain's Fairness Index} (JFI) based on the number of connections per UAV.

Let $\chi_j$ denote the load (number of served users) of the $j$-th UAV, which can be derived from the user association variable $\mu_{i,j}$:
\begin{equation}
\label{eq:uav_load}
    \chi_j = \sum_{i=1}^{N} \mu_{i,j}, \quad \forall j \in \mathcal{K}.
\end{equation}

The Jain's Fairness Index, denoted as $\mathcal{F}$, is defined as:
\begin{equation}
\label{eq:jain_fairness}
    \mathcal{F} = \frac{\left(\sum_{j=1}^{K} \chi_j\right)^2}{K \cdot \sum_{j=1}^{K} \chi_j^2}.
\end{equation}

The value of $\mathcal{F}$ ranges from $1/K$ to $1$. A value of $\mathcal{F} = 1$ indicates ideal fairness where all UAVs serve an equal number of users. Conversely, a lower value implies a disparity in load distribution, where some UAVs may be overloaded while others are underutilized.

\section{The Proposed SCOPE Framework}
\label{sec:algorithm}

In this section, we present the SCOPE (Satisfaction-driven Coverage Optimization via Perimeter Extraction) framework. The primary objective is to determine the optimal 3D coordinates $\mathbf{U}=\{\mathbf{u}_1, \dots, \mathbf{u}_K\}$
and the user association matrix $\mathbf{M}$
to maximize the user satisfaction rate while satisfying the backhaul capacity and geometric constraints defined in Section~\ref{sec:system_model}.

SCOPE operates on a peeling strategy derived from computational geometry. It iteratively identifies users on the boundary of the uncovered set, clusters them based on density constraints, and computes the optimal UAV altitude using the SEC algorithm.

\subsection{Algorithm Overview}
The flowchart of the SCOPE process is shown in Fig.~\ref{fig:flowchart}. The detailed steps, summarized in Algorithm~\ref{alg:scope_main}, consist of two nested phases that repeat until all users are served or available UAVs are exhausted:

\begin{enumerate}
    \item \textbf{Perimeter Extraction:} The algorithm iteratively identifies the boundary users (\textit{Convex Hull}) of the currently uncovered user set $\mathcal{G}_{\rm un}$. This peeling strategy ensures that the deployment prioritizes users at the edges who are most likely to be neglected by center-based heuristics.
    
    \item \textbf{Density-Adaptive Clustering:} Starting from a selected boundary user (seed), the algorithm constructs an initial maximal candidate set strictly bounded by a geometric search radius derived from the physical channel limits. 
    Instead of greedily aggregating neighbors, it adopts a \textit{top-down pruning} strategy. For the current candidate cluster, the SEC algorithm computes the minimal enclosing circle to determine the optimal horizontal position, while the altitude is derived to minimally cover this circle. The cluster feasibility is strictly constrained by three factors: 
    \begin{itemize}
        \item The backhaul capacity limit $\gamma_{\max}$ \eqref{cons:capacity},
        \item The maximum flight altitude $h_{\max}$ \eqref{cons:altitude}, and
        \item The QoS requirement $R_{\min}$ \eqref{cons:qos}, ensuring the worst-case user maintains valid connectivity.
    \end{itemize}
    If the candidate cluster violates any of these constraints, the algorithm iteratively removes the farthest user from the cluster center. This pruning process continues until all conditions are simultaneously satisfied, at which point the UAV is deployed to serve the finalized cluster.
\end{enumerate}

\begin{figure}[t]
\centering
\resizebox{\columnwidth}{!}{
\begin{tikzpicture}[node distance=1.2cm, auto,
    startstop/.style = {rectangle, rounded corners, minimum width=2cm, minimum height=0.8cm, text centered, align=center, draw=black, fill=gray!10, font=\footnotesize\bfseries, inner sep=2pt},
    process/.style = {rectangle, minimum width=2.2cm, minimum height=0.8cm, text centered, align=center, draw=black, fill=white, font=\footnotesize, inner sep=2pt},
    decision/.style = {diamond, aspect=2.2, minimum width=2.5cm, minimum height=0.8cm, text centered, align=center, draw=black, fill=white, font=\footnotesize, inner sep=0pt},
    io/.style = {trapezium, trapezium left angle=75, trapezium right angle=105, minimum width=2cm, minimum height=0.8cm, text centered, align=center, draw=black, fill=white, font=\footnotesize, inner sep=2pt},
    arrow/.style = {thick,->,>=stealth},
    line/.style = {thick,-}
]

\node (start) [startstop] {Start};
\node (input) [io, below of=start, node distance=1.1cm, text width=3cm] {Input: $\mathcal{G}$, QoS, Limits};
\node (init) [process, below of=input, node distance=1.1cm, text width=3cm] {Init: Mark all users\\as unserved};
\node (dec_outer) [decision, below of=init, node distance=1.6cm, text width=2.5cm] {Any users left?};
\node (hull) [process, below of=dec_outer, node distance=1.8cm, text width=4cm] {\textbf{Phase 1: Perimeter Extraction}\\Identify boundary users};
\node (init_cluster) [process, below of=hull, node distance=1.4cm, text width=3.8cm] {Pick seed \& form maximal\\candidate set $\mathcal{C}_{\text{test}}$};
\node (sec) [process, below of=init_cluster, node distance=1.4cm, text width=3.5cm] {\textbf{Phase 2: Pruning}\\Compute optimal 3D pos.};
\node (calc_h) [process, below of=sec, node distance=1.3cm, text width=3.5cm] {Calculate req. altitude\\\& worst-case QoS};
\node (dec_const) [decision, below of=calc_h, node distance=2.5cm, text width=2.8cm] {\textbf{All Limits Met?}\\1. Capacity?\\2. Altitude?\\3. QoS?};

\node (prune) [process, right of=sec, node distance=4.0cm, text width=2.5cm] {\textbf{Limits Violated!}\\Remove farthest user};
\node (deploy) [process, left of=sec, node distance=4.2cm, text width=2.8cm] {\textbf{All Satisfied!}\\Deploy UAV for current set};
\node (remove) [process, above of=deploy, node distance=2.2cm, text width=2.5cm] {Mark covered users\\as served};
\node (stop) [startstop, right of=dec_outer, node distance=4.0cm, text width=2cm] {Finish \& Output};

\draw [arrow] (start) -- (input);
\draw [arrow] (input) -- (init);
\draw [arrow] (init) -- (dec_outer);
\draw [arrow] (dec_outer) -- node[anchor=east, font=\footnotesize] {Yes} (hull);
\draw [arrow] (hull) -- (init_cluster);
\draw [arrow] (init_cluster) -- (sec);
\draw [arrow] (sec) -- (calc_h);
\draw [arrow] (calc_h) -- (dec_const);
\draw [arrow] (dec_outer) -- node[anchor=south, font=\footnotesize] {No} (stop);

% dec_const to prune (Loop back)
\draw [arrow] (dec_const.east) -| node[above, pos=0.25, yshift=0.5mm, font=\footnotesize] {No} (prune);
\draw [arrow] (prune) -- (sec);

% dec_const to deploy (Move forward)
\draw [arrow] (dec_const.west) -| node[above, pos=0.25, yshift=0.5mm, font=\footnotesize] {Yes} (deploy);
\draw [arrow] (deploy) -- (remove);
\draw [arrow] (remove) |- (dec_outer);

\end{tikzpicture}
}
\caption{Flowchart of the proposed SCOPE framework. The process iteratively peels the network from the boundary inward, constructing a maximal candidate set and iteratively pruning the farthest users until capacity, altitude, and QoS constraints are strictly satisfied.}
\label{fig:flowchart}
\end{figure}

\subsection{Perimeter Extraction and Initialization}
Let $\mathcal{G}_{\text{un}}$ denote the set of currently uncovered ground users, initialized as $\mathcal{G}_{\text{un}} \leftarrow \mathcal{G}$. In each iteration, SCOPE computes the boundary $\mathcal{B}$ of $\mathcal{G}_{\text{un}}$. We employ \textit{Andrew's Monotone Chain algorithm}~\cite{ANDREW1979216} to extract the convex hull. This choice ensures a strictly deterministic execution time regardless of user distribution. The boundary users are sorted in counter-clockwise order, i.e., $\mathcal{B}=\{b_1, b_2, \dots, b_m\}$, to facilitate orderly processing.

%----------------------------------------------------------------
% Algorithm 1: SCOPE Main Framework
%----------------------------------------------------------------
\begin{algorithm2e}[!t]
\SetAlgoLined
\caption{The SCOPE Deployment Framework}
\label{alg:scope_main}
\KwIn{Ground Users $\mathcal{G}$, Max UAVs $K$, Backhaul $C_{\text{backhaul}}$, Min Rate $R_{\min}$, Alt Limits $[h_{\min}, h_{\max}]$.}
\KwOut{Set of UAV-BSs $\mathbf{U}$, User Association $\mathbf{M}$.}

Initialize $\mathcal{G}_{\rm un} \leftarrow \mathcal{G}$, $\mathbf{U} \leftarrow \emptyset$, $\mathbf{M} \leftarrow \mathbf{0}$\;
Calculate capacity limit: $\gamma_{\max} \leftarrow \lfloor C_{\text{backhaul}} / R_{\min} \rfloor$\;
\While{$\mathcal{G}_{\rm un} \neq \emptyset$ \textbf{and} $|\mathbf{U}| < K$}{
    $\mathcal{B} \leftarrow \text{GetConvexHull}(\mathcal{G}_{\rm un})$\;
    Sort $\mathcal{B}$ in counter-clockwise order\;
    Select seed user: $b_{\rm seed} \leftarrow \mathcal{B}[0]$\;
    
    \tcp{Pass altitude limits to Algorithm 2}
    $(\mathbf{u}_{\rm opt}, \mathcal{C}_{\rm served}) \leftarrow \text{TopDownPruningSEC}(b_{\rm seed}, \mathcal{G}_{\rm un}, \gamma_{\max}, [h_{\min}, h_{\max}])$\;
    
    $\mathbf{U} \leftarrow \mathbf{U} \cup \{\mathbf{u}_{\rm opt}\}$\;
    
    \tcp{Update association matrix}
    Update $\mathbf{M}$ for users in $\mathcal{C}_{\rm served}$\;
    
    $\mathcal{G}_{\rm un} \leftarrow \mathcal{G}_{\rm un} \setminus \mathcal{C}_{\rm served}$\;
}
\Return{$\mathbf{U}, \mathbf{M}$}\;
\end{algorithm2e}

\subsection{Clustering and SEC Optimization}
For a selected boundary user $b_{\text{seed}} \in \mathcal{B}$, SCOPE attempts to form a service cluster $\mathcal{C}_{\text{served}}$ served by a UAV at location $\mathbf{u}_{\text{opt}}$. Mathematically, the formation of this cluster corresponds to determining the binary association variables for the $k$-th deployed UAV. Specifically, for all users $i \in \mathcal{C}_{\text{served}}$, we set $\mu_{i,k}=1$, and $\mu_{i,k}=0$ otherwise.

%\color{blue}
Unlike traditional greedy methods that build clusters bottom-up, SCOPE adopts a robust \textit{top-down pruning} strategy detailed in Algorithm~\ref{alg:sec_optimization}. To circumvent the $O(N^2)$ combinatorial explosion typical of unconstrained global searches, SCOPE bounds its search space using the physical channel models. Given the maximum allowable flight altitude $h_{\max}$ and the optimal elevation angle $\theta^*$ derived from the Air-to-Ground propagation environment~\cite{alhourani2014optimal}, the absolute maximum theoretical ground coverage radius is inherently bounded by $r_{\max} = h_{\max} / \tan(\theta^*)$. The process initializes a maximal candidate set $\mathcal{C}_{\text{test}}$ by strictly filtering all uncovered users within a geometric threshold of $2 \cdot r_{\max}$ centered around the seed user.

For the current candidate cluster, the optimal horizontal position $(x_c, y_c)$ and coverage radius $r_c$ are determined by solving the SEC problem~\cite{welzl1991smallest}. We employ \textit{Welzl's algorithm} to compute the unique minimal circle, which is geometrically determined by either two points (as a diameter) or three points (on the circumference) on the boundary of $\mathcal{C}_{\text{test}}$. This property ensures that the derived coverage radius $r_c$ is strictly minimal for the given set.

The required UAV altitude $h_{\text{req}}$ is then derived from the geometric constraint in~\eqref{cons:altitude}, while strictly enforcing the minimum flight altitude regulation:
\begin{equation}
    h_{\text{req}} = \max\left( h_{\min}, r_c \cdot \tan(\theta^*) \right).
\end{equation}
This lower-bound check guarantees that the UAV maintains a safe altitude $h_{\min}$ even when the cluster coverage radius $r_c$ is negligible.

If the current enclosing circle violates any constraints defined in the MINLP formulation, the algorithm iteratively prunes the set by removing the farthest user $g_{\text{far}}$ from the cluster center (excluding the seed). The pruning continues until all the following constraints are simultaneously satisfied:
\begin{itemize}
    \item The Capacity Constraint \eqref{cons:capacity}: $|\mathcal{C}_{\text{test}}| \le \gamma_{\max}$.
    \item The Altitude Constraint \eqref{cons:altitude}: $h_{\text{req}} \le h_{\max}$.
    \item The QoS Constraint \eqref{cons:qos}: The achievable data rate of the boundary user (worst-case in the cluster) must be greater than or equal to $R_{\min}$.
\end{itemize}
By utilizing this shrinking mechanism, SCOPE avoids the local optima often encountered in greedy expansions.
It guarantees that the derived solution captures the maximal feasible set of users within the geometric bound, while dynamic pruning ensures strict adherence to the complex tri-constraint mechanism.

\color{black}
%----------------------------------------------------------------
% Algorithm 2: Cluster Expansion
%----------------------------------------------------------------
\begin{algorithm2e}[!t]
\SetAlgoLined
%\color{blue}
\caption{Top-Down Pruning and SEC Optimization}
\label{alg:sec_optimization}
\KwIn{Seed $b_{\text{seed}}$, Candidates $\mathcal{G}_{\text{un}}$, Limit $\gamma_{\max}$, Min Rate $R_{\min}$, Alt Limits $[h_{\min}, h_{\max}]$, Opt. Elev. Angle $\theta^*$}.
\KwOut{Optimal Location $\mathbf{u}_{\text{opt}}$, Cluster $\mathcal{C}_{\text{served}}$.}

\tcp{Construct maximal initial set using optimal elevation angle}
Max Radius $r_{\max} \leftarrow h_{\max} / \tan(\theta^*)$\;
$\mathcal{C}_{\text{test}} \leftarrow \{ g \in \mathcal{G}_{\text{un}} \mid \text{dist}(g, b_{\text{seed}}) \le 2 \cdot r_{\max} \}$\;

\While{$|\mathcal{C}_{\text{test}}| \ge 1$}{
    $(x_c, y_c, r_c) \leftarrow \text{SEC}(\mathcal{C}_{\text{test}})$\;
    
    \tcp{Min altitude safety check based on optimal angle}
    $h_{\text{req}} \leftarrow \max(h_{\min}, r_c \cdot \tan(\theta^*))$\;
    
    \tcp{Check Geometric and QoS Feasibility}
    Calculate worst-case rate $R_{\rm edge}$ at height $h_{\rm req}$\;
    
    \eIf{$|\mathcal{C}_{\rm test}| \le \gamma_{\max}$ \textbf{and} $h_{\rm req} \le h_{\max}$ \textbf{and} $R_{\rm edge} \ge R_{\min}$}{
        $\mathbf{u}_{\text{opt}} \leftarrow (x_c, y_c, h_{\text{req}})$\;
        $\mathcal{C}_{\text{served}} \leftarrow \mathcal{C}_{\text{test}}$\;
        \textbf{break} \tcp*{All constraints satisfied}
    }{
        \tcp{Prune the worst/farthest user}
        Find $g_{\text{far}} \in \mathcal{C}_{\text{test}} \setminus \{b_{\text{seed}}\}$ farthest from $(x_c, y_c)$\;
        $\mathcal{C}_{\text{test}} \leftarrow \mathcal{C}_{\text{test}} \setminus \{g_{\text{far}}\}$\;
    }
}
\Return{$(\mathbf{u}_{\rm opt}, \mathcal{C}_{\rm served})$}\;
\end{algorithm2e}

\subsection{Theoretical Analysis}
In this subsection, we provide the theoretical convergence proof and complexity analysis of the proposed framework.

\begin{thm}[Convergence]
    The proposed SCOPE algorithm is guaranteed to terminate in a finite number of iterations, finding a valid deployment solution.
\end{thm}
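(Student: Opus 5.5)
The proof splits into termination of the inner pruning loop (Algorithm~\ref{alg:sec_optimization}), strict progress of the outer loop (Algorithm~\ref{alg:scope_main}), and feasibility of the output with respect to \eqref{eq:optimization_problem}. For the inner loop, I will use $|\mathcal{C}_{\text{test}}|$ as a potential function. The initial set $\mathcal{C}_{\text{test}}\subseteq\mathcal{G}_{\rm un}$ is finite, since it has at most $N$ elements. It contains $b_{\text{seed}}$, because $\text{dist}(b_{\text{seed}},b_{\text{seed}})=0\le 2r_{\max}$. Each pass through the loop either breaks or removes exactly one user $g_{\text{far}}\neq b_{\text{seed}}$, which lowers the potential by one. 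The seed is never removed. Hence the loop runs at most $|\mathcal{C}_{\text{test}}|-1\le N-1$ times before either breaking or reducing the set to the singleton $\{b_{\text{seed}}\}$. Welzl's algorithm, Andrew's monotone chain, and the rate evaluation are each finite procedures, so every iteration is finite.

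The delicate step is the singleton case. I want to argue that $\{b_{\text{seed}}\}$ always passes the test, so that the loop always returns a nonempty $\mathcal{C}_{\text{served}}$. For a singleton, $r_c=0$ and therefore $h_{\text{req}}=h_{\min}\le h_{\max}$. The capacity test $1\le\gamma_{\max}$ holds under the natural standing assumption $C_{\text{backhaul}}\ge R_{\min}$. The QoS test is not automatic, because interference from previously deployed UAVs can push $R_{\rm edge}$ below $R_{\min}$. I will therefore state a mild feasibility assumption explicitly: a UAV hovering directly above an isolated user at $h_{\min}$ delivers at least $R_{\min}$. This is the main obstacle, and it cannot be derived from the model alone. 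If the assumption is dropped, I would amend the argument as follows: when even the singleton fails, the seed is marked unservable and removed from $\mathcal{G}_{\rm un}$ with no UAV deployed. Termination still holds in that case, and the only change is that the user is counted as unsatisfied. Either way, each call to Algorithm~\ref{alg:sec_optimization} finishes with $|\mathcal{G}_{\rm un}|$ strictly decreased.

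For the outer loop, I will use the lexicographic potential $(|\mathcal{G}_{\rm un}|,\,K-|\mathbf{U}|)$. Each iteration removes $\mathcal{C}_{\text{served}}\ni b_{\text{seed}}$ from $\mathcal{G}_{\rm un}$, and under the assumption above it adds one UAV. The loop exits when $\mathcal{G}_{\rm un}=\emptyset$ or $|\mathbf{U}|=K$. Consequently there are at most $\min(N,K)$ outer iterations, and the total number of inner iterations is at most $\sum(|\mathcal{C}_{\text{test}}|)\le N\cdot N$. This bound also prepares the complexity result that follows.

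Finally, validity of the output follows by induction over the outer iterations:
\begin{itemize}
\item Constraint \eqref{cons:association} holds because the clusters are disjoint: each cluster is drawn from $\mathcal{G}_{\rm un}$, and its members are removed from $\mathcal{G}_{\rm un}$ immediately afterwards.
\item Constraints \eqref{cons:capacity}, \eqref{cons:altitude} and \eqref{cons:qos} hold for each deployed UAV because the \texttt{break} branch is taken only when all three checks pass. Here $h_{\text{req}}\ge h_{\min}$ holds by construction.
\item At most $K$ UAVs are deployed, by the loop guard.
\end{itemize}

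One caveat needs a remark. Later UAVs add interference to users who were already served, so I would interpret QoS validity as holding at the moment of deployment, that is, against the interference present at that time. To make it hold globally, one would assume that $R_{\rm edge}$ is computed conservatively, for example with a worst-case interference margin.
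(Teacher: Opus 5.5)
Your proof follows the paper's argument: Algorithm~\ref{alg:sec_optimization} never prunes $b_{\text{seed}}$, so each outer iteration serves at least the seed, $|\mathcal{G}_{\rm un}|$ strictly decreases, and the loop stops once $\mathcal{G}_{\rm un}$ is empty or $K$ UAVs are deployed. Your additions make rigorous a step the paper's proof assumes when it treats ``the seed is never removed'' as if it meant ``the seed is served'': you prove the inner pruning loop terminates, you require explicitly that the singleton $\{b_{\text{seed}}\}$ pass the test (via $\gamma_{\max}\ge 1$ plus a QoS-feasibility assumption, or else by discarding the seed), and you verify \eqref{cons:association}--\eqref{cons:qos} with the interference caveat.
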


\begin{IEEEproof}
Let $\mathcal{G}_{\text{un}}^{(k)}$ denote the set of uncovered users at the beginning of the $k$-th iteration. In each iteration, the algorithm selects a seed user $b_{\text{seed}} \in \mathcal{G}_{\text{un}}^{(k)}$ and forms a valid cluster $\mathcal{C}_{\text{served}}^{(k)}$. Since the top-down pruning step (Algorithm 2) strictly prevents the removal of the seed user, the algorithm guarantees that at least the seed user is served (i.e., $|\mathcal{C}_{\text{served}}^{(k)}| \ge 1$). Thus, the set of uncovered users strictly decreases:
\begin{equation}
    |\mathcal{G}_{\text{un}}^{(k+1)}| = |\mathcal{G}_{\text{un}}^{(k)}| - |\mathcal{C}_{\text{served}}^{(k)}| < |\mathcal{G}_{\text{un}}^{(k)}|.
\end{equation}
Given that the initial number of users $N$ is finite, the sequence $|\mathcal{G}_{\text{un}}^{(k)}|$ must reach 0 or the maximum number of UAVs $K$ is reached. Thus, the algorithm guarantees convergence.
\end{IEEEproof}

\begin{thm}[Time Complexity]
    The worst-case time complexity of SCOPE is strictly bounded by $O(N^2 \log N)$, avoiding the combinatorial explosion of unconstrained heuristics.
\end{thm}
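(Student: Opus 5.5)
The plan is to bound the cost of one pass of the outer \textbf{while} loop in Algorithm~\ref{alg:scope_main}, bound the number of passes, and multiply. For the number of passes, the Convergence theorem above gives the key fact: every pass serves at least the seed user, so $|\mathcal{G}_{\rm un}|$ strictly decreases. The loop also stops once $|\mathbf{U}|=K$. Hence there are at most $\min(N,K)\le N$ passes. It therefore suffices to show that a single pass costs $O(N\log N)$ in the worst case, which gives the total $O(N)\cdot O(N\log N)=O(N^2\log N)$.

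Within one pass I would account for the steps in order.
\begin{enumerate}
\item The convex hull of $\mathcal{G}_{\rm un}$ via Andrew's Monotone Chain costs $O(N\log N)$, dominated by the coordinate sort. The output is already in counter-clockwise order, so the subsequent ``sort $\mathcal{B}$'' costs at most $O(N\log N)$ as well. Selecting the seed is $O(1)$.
\item Forming $\mathcal{C}_{\rm test}$ by the $2r_{\max}$ filter is a single $O(N)$ scan.
\item Updating $\mathbf{M}$ and $\mathcal{G}_{\rm un}$ is $O(N)$.
\end{enumerate}
This leaves the top-down pruning loop of Algorithm~\ref{alg:sec_optimization} as the only nontrivial part.

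For the pruning loop I would split its iterations into two regimes. In the first regime, $|\mathcal{C}_{\rm test}|>\gamma_{\max}$. Here the capacity test already fails, and the farthest-user removal is the only relevant action. I would argue that this phase can be charged as one sort of $\mathcal{C}_{\rm test}$ by distance, costing $O(N\log N)$, followed by removals at $O(1)$ or $O(\log N)$ each via a heap. This requires evaluating the capacity condition before invoking SEC, a short-circuit ordering of the tri-constraint check that I would state explicitly. In the second regime, $|\mathcal{C}_{\rm test}|\le\gamma_{\max}$. There are at most $\gamma_{\max}$ further iterations, and each costs one Welzl call, expected $O(\gamma_{\max})$, plus a worst-case rate evaluation, $O(\gamma_{\max})$ for the boundary user with interference from at most $K$ already-deployed UAVs. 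This regime costs $O(\gamma_{\max}(\gamma_{\max}+K))$. Since $\gamma_{\max}=\lfloor C_{\text{backhaul}}/R_{\min}\rfloor$ and $K$ are fixed system parameters independent of $N$, this term is $O(1)$ in $N$, or at worst absorbed into $O(N\log N)$ when $\gamma_{\max}(\gamma_{\max}+K)=O(N\log N)$. Hence each pass is $O(N\log N)$.

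The main obstacle is this pruning-loop bound. Taken literally, Algorithm~\ref{alg:sec_optimization} recomputes the SEC after every single removal, even while the capacity test still fails. This gives up to $O(N)$ Welzl calls of expected $O(N)$ each, which is $O(N^2)$ per pass and $O(N^3)$ overall, so the claimed bound does not follow from a naive count. My first attempt is the short-circuit and sorted-removal argument above. If the literal per-removal SEC recomputation must be kept, the fallback is to note that the farthest point from the SEC center lies on the circle, and to argue that the subsequent SEC can be updated incrementally in amortized $O(\log N)$. I am less confident that this holds for every deletion sequence. I would also make explicit that Welzl's linear bound is an expected one, so ``worst-case'' should be read with deterministic SEC variants (e.g.\ Megiddo's linear-time algorithm) substituted where needed.
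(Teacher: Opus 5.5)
Your outline matches the paper's proof: at most $O(N)$ passes, each dominated by the $O(N\log N)$ monotone-chain hull, with the $2r_{\max}$ filter in $O(N)$ and the pruning charged as one $O(m\log m)$ distance sort, cheap removals, and a single Welzl call. You are right that this accounting ignores something: Algorithm~\ref{alg:sec_optimization} recomputes the SEC at the top of every pruning iteration. The paper's proof has the same omission.

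Your short-circuit does not repair it. The rule removes the user farthest from the \emph{current} SEC center $(x_c,y_c)$, and that center moves after each deletion. If the SEC is postponed until $|\mathcal{C}_{\rm test}|\le\gamma_{\max}$, there is no center to measure ``farthest'' from. A single sort by distance from any fixed point (the seed, or the initial SEC center) generally gives a different deletion sequence, hence a different $\mathcal{C}_{\rm served}$ and $\mathbf{u}_{\rm opt}$. Regime~1 is therefore bounded only for a modified SCOPE, and you would have to state the theorem for that variant. Your fallback, decremental SEC maintenance in amortized $O(\log N)$, is not a standard result at that cost, and you give no mechanism for it.

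The gap is real, not bookkeeping. Take the algorithm as written, with $K$ allowed to grow with $N$ as in your pass count and the paper's; then the claimed bound fails. Regime~1 calls the SEC on sets of sizes $m,m-1,\dots,\gamma_{\max}+1$. Welzl must at least read its input, so one pass costs $\Omega(m^2-\gamma_{\max}^2)$. Now put all $N$ users in a disk of radius $r_{\max}$ and take $K\ge N/(2\gamma_{\max})$:
\begin{itemize}
\item the $2r_{\max}$ filter keeps all of $\mathcal{G}_{\rm un}$;
\item each pass serves at most $\gamma_{\max}$ users;
\item so each of the first $\lfloor N/(2\gamma_{\max})\rfloor$ passes has $m>N/2$.
\end{itemize}
The total is $\Omega(N^3/\gamma_{\max})$, i.e.\ $\Omega(N^3)$, since $\gamma_{\max}=\lfloor C_{\text{backhaul}}/R_{\min}\rfloor$ does not grow with $N$.

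A correct theorem needs one of two repairs:
\begin{itemize}
\item[(i)] An explicitly changed pruning order, e.g.\ sort once by distance from the seed and run the SEC/altitude/QoS test only when $|\mathcal{C}_{\rm test}|\le\gamma_{\max}$. Your per-pass $O(N\log N)$ count then goes through.
\item[(ii)] The unchanged algorithm, with $K$ treated as the fixed budget you already assume in regime~2. This gives roughly $O(\min(K,N)\,N^2)$, which is within $O(N^2\log N)$ only for $K=O(\log N)$.
\end{itemize}

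Your remark that Welzl's linear bound is only expected is correct: a true worst-case claim needs a deterministic linear-time SEC.

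One further point: your pass count inherits from the Convergence theorem the premise that the seed is always served. If the singleton $\{b_{\rm seed}\}$ fails the QoS test, then $\mathcal{C}_{\rm test}\setminus\{b_{\rm seed}\}=\emptyset$ and Algorithm~\ref{alg:sec_optimization} has no defined step.
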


\begin{IEEEproof}%\color{blue}
Let $N$ be the total number of GUs. SCOPE iteratively deploys UAVs to cover subsets of users. The computational complexity per deployed UAV is analyzed as follows:
\begin{enumerate}
    \item \textit{Perimeter Extraction}: Computing the convex hull of the currently uncovered users using Andrew's Monotone Chain algorithm requires $O(N \log N)$ time due to the initial lexicographical sorting step.
    \item \textit{Clustering and Pruning}: Unlike unconstrained greedy methods that evaluate all $N$ users, SCOPE restricts the search space using a physical geometric threshold $2r_{\max}$. Finding users within this bounded region takes $O(N)$ time. The subsequent distance calculation and sorting of this subset require $O(m \log m)$ time, where $m$ is the number of local candidates ($m \le N$). Once sorted, the top-down pruning process iteratively removes the farthest user in $O(1)$ time per operation, and is strictly capped by $\gamma_{\max}$. This ensures the entire pruning phase remains efficient. Crucially, because $m$ is physically bounded by the maximum geographic footprint $2r_{\max}$, the top-down pruning operations process a localized subset rather than the global population, thereby circumventing $O(N^2)$ inner loops.
    \item \textit{Smallest Enclosing Circle (SEC)}: Welzl's algorithm solves the SEC problem in expected linear time $O(m)$ for the candidate subset.
\end{enumerate}
In the extreme worst-case scenario where the spatial density is sparse and each deployed UAV covers only a minimal subset, the algorithm requires $K \approx O(N)$ sequential deployments. Summing the dominant convex hull extraction step across $K$ iterations yields a total bounded complexity of $\sum_{k=1}^{K} O(N_k \log N_k) \approx O(N^2 \log N)$. This confirms that SCOPE operates in low-order polynomial time, facilitating deterministic millisecond-level execution.
\end{IEEEproof}

%\color{blue}
\section{Simulation Results and Discussion}
\label{sec:simulation}
In this section, we conduct a comprehensive evaluation of the proposed SCOPE framework. All simulations are implemented in Matlab and executed on a desktop equipped with an AMD Ryzen 9 5950X CPU and 64 GB RAM. We compare SCOPE against multiple traditional heuristics and learning-based baselines to validate its effectiveness in terms of connectivity, equity, and resource efficiency.

\subsection{Simulation Setup and Baselines}
We consider a dynamic hotspot scenario within a $D \times D = 400 \times 400$ m$^2$ area. To simulate realistic extreme hotspot scenarios, the spatial positions of ground users are generated using a \textit{Spatial Branching Process} (SBP)~\cite{chiu2013stochastic}, which creates hierarchical and multi-centered user clusters. Unlike traditional static models, the SBP utilizes a recursive stochastic mechanism to generate offspring across multiple generations, accurately reflecting the unpredictable and often elongated distribution of survivors in post-disaster environments. To align with our focus on cold-start emergency missions, our simulation evaluates the algorithmic performance on these snapshot topologies to validate SCOPE's adaptability to heterogeneous densities without any prior knowledge. The channel parameters and physical constraints are summarized in Table~\ref{tab:params}.

In our experiments, we evaluate the performance using three primary metrics: USR ($\mathcal{S}$), JFI ($\mathcal{F}$), and EE ($\eta_{\text{EE}}$). While commercial cellular networks often prioritize the average communication rate to maximize individual experience, emergency post-disaster networks necessitate a fundamentally different approach. In such mission-critical scenarios, the primary objective is to establish a basic digital lifeline for the maximum number of survivors. Optimizing for the average data rate can inadvertently exacerbate near-far polarization, as algorithms may prioritize users directly beneath the UAV to artificially inflate the average metric while leaving boundary users in complete communication voids. Consequently, we select the USR as our primary objective. By strictly enforcing a minimum QoS threshold $R_{\min}$ for the boundary users during the cluster expansion phase, the proposed framework prevents this severe polarization and ensures equitable access to critical connectivity.

We compare SCOPE with the following baselines:
\begin{enumerate}
    \item \textbf{Counter-Clockwise Spiral (CCS)~\cite{7762053}:} A geometric heuristic that determines UAV placement by iteratively extracting the convex hull boundary and performing a counter-clockwise spiral search to cover ground users. It operates under a fixed altitude and predefined service radius, representing a deterministic static placement strategy.
    
    \item \textbf{K-Means Clustering:} A classic unsupervised learning method that minimizes the aggregate user-to-UAV distance but ignores coverage radius constraints. Critically, standard K-Means requires the cluster number $K$ to be manually predefined, introducing significant man-in-the-loop decision overhead that is impractical for on-demand emergency response. To establish a rigorous Oracle baseline and eliminate this decision overhead, we implement two variants:
    \begin{itemize}
        \item \textbf{KM ($K^{\text{SCOPE}}$):} The number of UAVs is set to the optimal $K$ dynamically derived by our proposed SCOPE framework.
        \item \textbf{KM ($K^{\text{CCS}}$):} The number of UAVs matches the output of the CCS heuristic.
    \end{itemize}
    This controlled dual-setup strictly isolates the algorithmic performance, demonstrating the superiority of our perimeter extraction strategy against traditional centroid-based clustering when both are given the exact same hardware budget.

    \item \textbf{Voronoi-based Deployment~\cite{zhao2018research}:} A geometric approach that divides the area based on dominance regions, often used for load balancing.
    
    \item \textbf{Random Deployment:} UAVs are randomly placed within the 3D airspace.
    
    \item \textbf{Deep Reinforcement Learning with Cold-Start:} To explicitly analyze the trade-off between deployment latency and solution optimality, we include \textit{Proximal Policy Optimization with Cold-Start} (PPO-CS) to benchmark the computational overhead and training costs in zero-day scenarios.
\end{enumerate}

\begin{table}[!t]
	%\color{blue}
\caption{Simulation Parameters}
\label{tab:params}
\centering
\begin{tabular}{|l|c|c|}
    \hline
    \textbf{Parameter} & \textbf{Symbol} & \textbf{Value} \\
    \hline
    \hline
    Carrier Frequency & $f_c$ & 2 GHz \\
    \hline
    System Bandwidth & $B$ & 20 MHz \\
    \hline
    Noise Power & $\sigma^2$ & $-101$ dBm \\
    \hline
    Transmit Power & $P_t$ & 0.1 W \\
    \hline
    Hovering Power & $P_{\text{hover}}$ & 150 W \\
    \hline
    Map Size & $D \times D$ & $400 \times 400$ m$^2$\\
    \hline
    Constrained UAV Altitude & $[h_{\min}, h_{\max}]$ & [20, 120] m\\
    \hline
    Urban Channel Env. Constants & $a, b$ & 12.08, 0.11 \\
    \hline
    Excess Attenuation (LoS / NLoS) & $\eta_{LoS}, \eta_{NLoS}$ & 1.6 dB, 23 dB \\
    \hline
    Min. Data Rate & $R_{\min}$ & 2 to 10 Mbps \\
    \hline
    Backhaul Capacity & $C_{\text{backhaul}}$ & 150 Mbps \\
    \hline
    Total Users & $N$ & 200 to 1000 \\
    \hline
    Max Users per UAV & $\gamma_{\max}$ & Dynamic \\
    \hline
    Spatial dispersion parameter & $\sigma_{\text{s}}$ & 11 m \\
    \hline
    Mean number of offspring & $\lambda_{\text{off}}$ & 0.9 \\
	\hline
	Optimal Elevation Angle~\cite{alhourani2014optimal} & $\theta^*$ & $54.62^\circ$ \\
    \hline
\end{tabular}
\end{table}

To ensure a fair and transparent comparison regarding energy efficiency and coverage, the flight altitude mechanisms for all evaluated algorithms are explicitly defined. The absolute operational boundaries are set to $h_{\min}$ of 20 m and $h_{\max}$ of 120 m as detailed in Table~\ref{tab:params}. The CCS heuristic and Random Deployment utilize a fixed typical altitude of 100 m. In contrast, SCOPE, PPO-CS, K-Means, and Voronoi dynamically adjust their altitudes based on the spatial distribution of their assigned user clusters to minimize the enclosing radius.

To ensure the statistical significance of our findings, all presented data points are averaged over 100 independent Monte Carlo simulations. In each run, the aforementioned Spatial Branching Process (SBP) is utilized to generate a completely new and highly heterogeneous spatial topology. This stochastic approach effectively emulates the complex clustering behavior of survivors in disaster zones, where individuals tend to aggregate in dense hotspots while sparse populations remain scattered at the network periphery. The 95\% confidence intervals are evaluated and depicted as error bars in the respective figures. By evaluating the algorithms across these 100 non-typical and highly uneven snapshots, we provide robust statistical evidence that the performance advantages of the deterministic SCOPE framework remain resilient against extreme environmental stochasticity.

\subsection{User Satisfaction Rate Analysis}
\label{sec:usr_analysis}

We first evaluate the user satisfaction rate $\mathcal{S}$ under varying user densities and QoS requirements. This metric serves as a direct indicator of the network reliability in disaster recovery missions.

\subsubsection{Varying number of users}

Fig.~\ref{fig:satisfaction:a} illustrates the satisfaction rate $\mathcal{S}$ against the total number of users $N$ (ranging from 200 to 1000) with a fixed $R_{\min}$ of 2 Mbps. As user density increases, the performance of static and unconstrained baselines degrades significantly. The CCS algorithm~\cite{7762053}, while being a classic heuristic, is limited by its predefined geometric spiral structure and fails to adapt to irregular user clusters. This leads to a relatively low $\mathcal{S}$ of approximately 43\% as $N$ reaches 1000. 

More severely, Voronoi and K-Means deployments suffer from a fundamental geometric flaw known as the QoS-blind partitioning bias. To isolate the impact of deployment logic, we introduced the $K^{\text{SCOPE}}$ and $K^{\text{CCS}}$ baselines, which utilize the exact same number of UAV-BSs as their respective counterparts. The results show that despite having the same hardware budget, K-Means variants perform poorly because they rely on centroid-based partitioning without capacity awareness. These methods frequently generate clusters that are too large to satisfy the tri-constraint feasibility. Consequently, the required service altitudes for these methods often hit the maximum ceiling $h_{\max}$, triggering fatal path loss and driving the satisfaction rate of K-Means below 2\%. Voronoi partitioning is even less effective, essentially failing to provide any valid coverage ($\mathcal{S} \approx 0\%$) in high-density scenarios.

Notably, the PPO-CS baseline performs poorly across all density levels, with satisfaction rates hovering below 10\% and even underperforming random deployment. This validates our assertion that untrained learning-based models produce highly inefficient 3D coordinates. In zero-day disaster scenarios where environmental priors are absent, this lack of convergence leads to a critical failure in establishing a reliable digital lifeline.

In contrast, the SCOPE framework demonstrates exceptional robustness, consistently maintaining a high satisfaction rate between 82\% and 88\%. This stability stems from the proposed top-down shrinking strategy. By targetting edge users through perimeter extraction and iteratively pruning the user topology inward, SCOPE strictly bounds the geometric radius to ensure excellent channel conditions. This deterministic mechanism effectively shields the network from the negative impacts of user densification.

\begin{figure}[!t]
    \centering
    \subfigure[Number of users $N$]{
        \label{fig:satisfaction:a}
        \includegraphics[width=0.9\columnwidth]{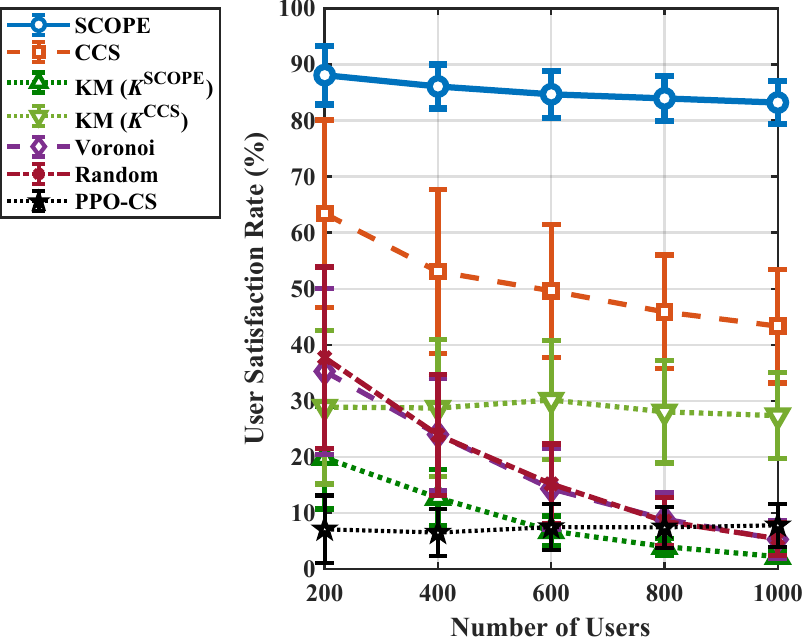}
    }\\%
    \subfigure[QoS requirement $R_{\min}$ (Mbps)]{
        \label{fig:satisfaction:b}
        \includegraphics[width=0.9\columnwidth]{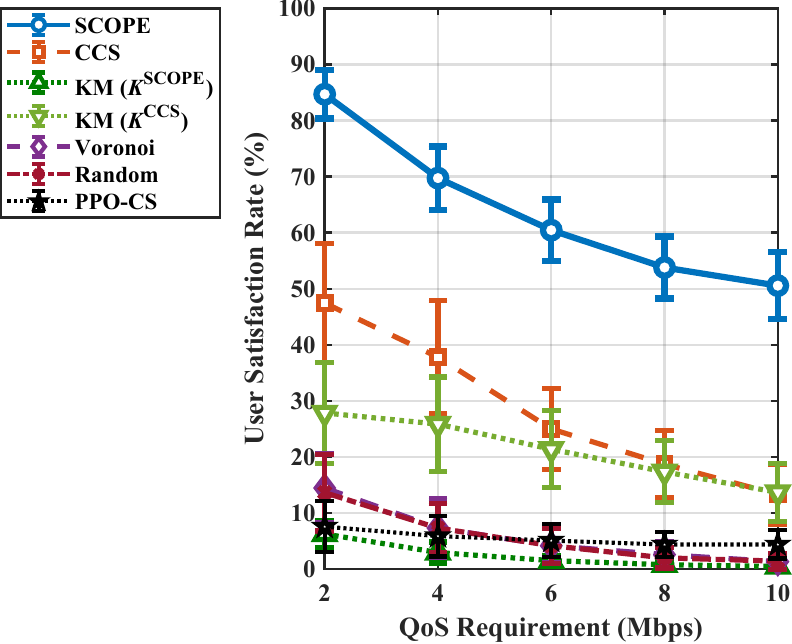}
    }
    \caption{Performance results in terms of user satisfaction rate $\mathcal{S}$ under varying
        \subref{fig:satisfaction:a} Number of users $N$ and 
        \subref{fig:satisfaction:b} QoS requirement $R_{\min}$ (Mbps).}%
    \label{fig:satisfaction}
\end{figure}

\subsubsection{Varying QoS requirements}

Fig.~\ref{fig:satisfaction:b} evaluates the satisfaction $\mathcal{S}$ under varying QoS requirements (2 to 10 Mbps) with $N=600$ users. As $R_{\min}$ increases, the individual bandwidth allocated per user effectively decreases, which imposes a tighter capacity bottleneck. 

SCOPE consistently maintains its performance lead, achieving a functional satisfaction rate of approximately 50\% even under the most stringent QoS threshold of 10 Mbps. This resilience is directly attributed to the tri-constraint mechanism and adaptive altitude adjustment via the SEC algorithm. By dynamically limiting cluster sizes through the pruning process, SCOPE prevents UAV altitudes from reaching inefficient heights, thereby minimizing path loss and maximizing the received SINR. In contrast, Voronoi and K-Means methods exhibit catastrophic performance drops. Their inability to verify QoS feasibility during placement forces UAVs to operate near $h_{\max}$ in a vain attempt to cover large areas, which is physically incompatible with high-capacity requirements. Furthermore, the PPO-CS baseline flatlines near 4\% to 7\%, reinforcing our conclusion that training-free geometric heuristics are mandatory for zero-day mission instantaneous deployment.

\subsection{Fairness Analysis}
\label{sec:fairness}

We verify the load balancing performance among deployed UAV-BSs using the JFI, denoted as $\mathcal{F}$. A higher $\mathcal{F}$ value indicates a more equitable distribution of served users across the active fleet, which is essential for preventing backhaul congestion in specific clusters.

\subsubsection{Varying number of users}

Fig.~\ref{fig:fairness:a} presents the fairness index against the number of users $N$ with a fixed QoS requirement of $R_{\min}=2$ Mbps. Across all density levels, SCOPE and CCS achieve the highest fairness indices, maintaining $\mathcal{F}$ between 0.20 and 0.30. 

The absolute values of the fairness index across all tested methods remain below 0.35. This phenomenon is a direct consequence of the highly heterogeneous spatial distribution generated by the SBP model. In realistic disaster scenarios, the distribution of users is inherently unbalanced; hotspots require certain UAVs to operate at full backhaul capacity, while the isolation of sparse groups at the network boundaries necessitates the deployment of UAVs with significantly lower loads. While this spatial disparity limits the mathematical fairness score, it accurately mirrors the operational reality of emergency response. In such missions, the priority is to maximize aggregate satisfaction rather than enforcing an artificial load balance across a non-uniform population.

Conversely, the K-Means variants and Voronoi methods demonstrate catastrophic load imbalance, with $\mathcal{F}$ values often dropping below 0.1. By partitioning the service area based on pure 2D geometry without enforcing capacity bounds during assignment, these methods assign massive crowds to a single UAV centroid while leaving other UAVs almost vacant. Furthermore, the untrained PPO-CS baseline flatlines at the bottom, proving that a neural network without converged environmental priors fails to distribute traffic loads geometrically.

\begin{figure}[!t]
    \centering
    \subfigure[Number of users $N$]{
        \label{fig:fairness:a}
        \includegraphics[width=0.9\columnwidth]{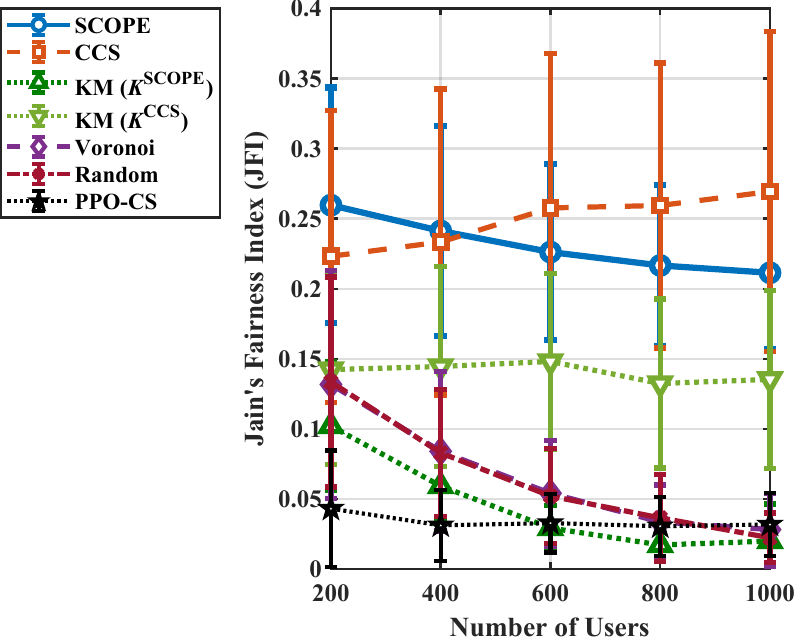}
    }\\%
    \subfigure[QoS requirement $R_{\min}$ (Mbps)]{
        \label{fig:fairness:b}
        \includegraphics[width=0.9\columnwidth]{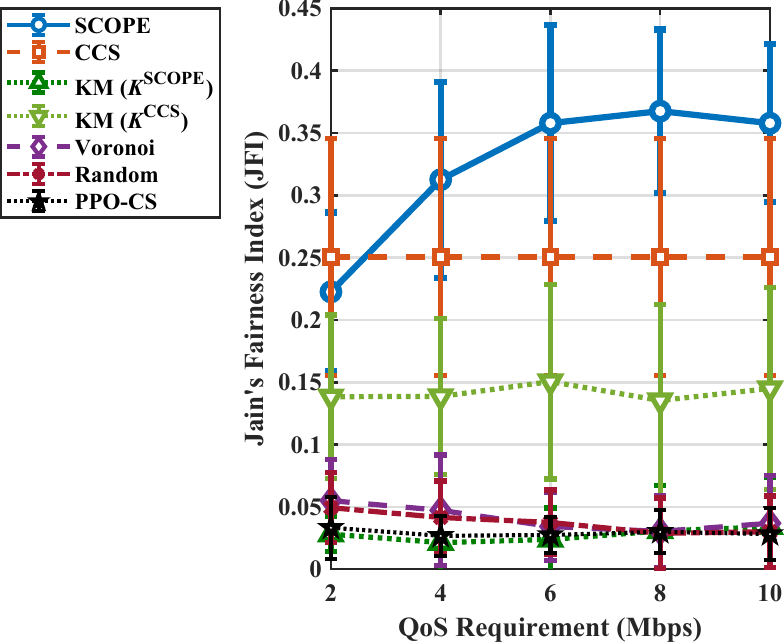}
    }
    \caption{Performance results in terms of fairness index $\mathcal{F}$ under varying
        \subref{fig:fairness:a} Number of users $N$ and 
        \subref{fig:fairness:b} QoS requirement $R_{\min}$ (Mbps).}%
    \label{fig:fairness}
\end{figure}

\subsubsection{Varying QoS requirements}

Fig.~\ref{fig:fairness:b} illustrates the fairness index $\mathcal{F}$ under varying QoS requirements with $N=600$ users. A fascinating and counter-intuitive phenomenon emerges as the minimum data rate becomes more stringent. While the baseline methods remain stagnant or degrade, the fairness index of SCOPE exhibits a significant upward trend, rising from approximately 0.22 at 2 Mbps and stabilizing near 0.35 under extreme constraints of 10 Mbps.

This positive correlation highlights the dynamic adaptability of the SCOPE tri-constraint pruning mechanism. As $R_{\min}$ increases, the effective capacity limit per UAV decreases because each user consumes a larger portion of the total backhaul. Consequently, SCOPE is mathematically forced to partition the topology into smaller and more uniformly sized clusters to ensure strict feasibility. This pruning-driven shrinking mechanism effectively breaks up massive density peaks and distributes the traffic load more evenly across a larger fleet of UAVs. 

In stark contrast, the CCS remains insensitive to QoS changes due to its reliance on a rigid static placement structure. Meanwhile, K-Means and PPO-CS remain paralyzed near the absolute bottom with $\mathcal{F} < 0.05$. This outcome validates that SCOPE is uniquely capable of actively restructuring its network topology to maintain equitable load balancing even under severe mission-critical bottlenecks.

\subsection{Energy Efficiency Analysis}

In this part, we analyze the energy efficiency (EE), denoted as $\eta_{\text{EE}}$, of the proposed SCOPE framework compared to the baseline algorithms. This metric is extremely critical in post-disaster scenarios where UAV-BSs are strictly constrained by limited onboard battery capacities and must justify every joule spent on hovering.

\subsubsection{Varying number of users}
Fig.~\ref{fig:ee:a} presents the energy efficiency performance against the number of users $N$ with $R_{\min}$ fixed at 2 Mbps.

The SCOPE framework consistently outperforms the evaluated benchmarks in all density scenarios, maintaining a high and stable energy efficiency between 38 and 40 Mbits/Joule. This robust performance gain is a direct mathematical result of the top-down pruning mechanism. Although SCOPE may deploy specific UAV-BSs to cover isolated boundary clusters to ensure high network reliability, the pruning process ensures that each stationary deployment is geometrically optimized to be capacity-saturated. This maximizes the valid aggregate throughput, which is the numerator of $\eta_{\text{EE}}$, relative to the 150 W hovering power expended per UAV. Consequently, SCOPE ensures that no electrical energy is wasted on maintaining sparse or low-quality links that do not satisfy the tri-constraint feasibility.

In contrast, the CCS algorithm demonstrates a moderate but consistently inferior performance. Although it utilizes a perimeter-based spiral placement strategy, its reliance on a rigid geometric structure often leads to deployments over areas with suboptimal user density. This structural rigidity prevents CCS from achieving the same capacity packing efficiency as SCOPE, resulting in wasted hovering power. More critically, the K-Means variants and Voronoi methods exhibit a severe degradation in EE as user density increases. Without a constraint-aware bounding mechanism, these methods form excessively large clusters that force UAVs to remain suspended at the altitude ceiling $h_{\max}$ to encompass boundary users. This triggers fatal path loss and massive connection failures, causing the valid throughput to plummet while the fleet continues to consume full hovering power. The untrained PPO-CS baseline similarly flatlines at the absolute bottom, proving that pseudo-random 3D deployments simply burn energy without establishing a valid digital lifeline.

\subsubsection{Varying QoS requirements}
Fig.~\ref{fig:ee:b} evaluates the energy efficiency under varying QoS requirements with $N=600$ users.

As $R_{\min}$ increases, the energy efficiency of SCOPE exhibits a graceful and highly resilient degradation, maintaining approximately 20 Mbits/Joule even at a stringent threshold of 10 Mbps. This stability stems from the tri-constraint pruning logic. As individual capacity demands increase, SCOPE dynamically reduces its coverage radii to maintain high-quality SINR, ensuring that the hovering power is always translated into valid throughput that meets the mandatory QoS level.

Conversely, the energy efficiency of the K-Means variants and Voronoi partitioning suffers a significant performance degradation as $R_{\min}$ tightens. Because these methods are unaware of the tightening QoS boundaries during the placement phase, their oversized clusters fail to support high data rate requirements, effectively disconnecting the vast majority of users. Mathematically, the effective throughput drops toward zero while the hovering power consumption remains constant, which annihilates the system energy efficiency. The persistent flatline of PPO-CS further corroborates that the deterministic, constraint-aware logic of SCOPE is indispensable for maximizing green communications in unpredictable and resource-constrained disaster networks.

\begin{figure}[!t]
    \centering
    \subfigure[Number of users $N$]{
        \label{fig:ee:a}
        \includegraphics[width=0.9\columnwidth]{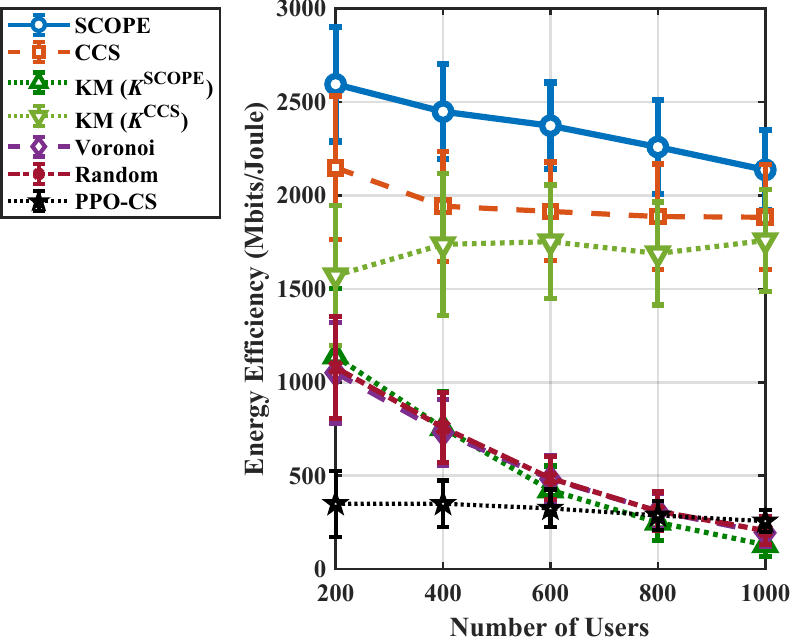}
    }\\
    \subfigure[QoS requirement $R_{\min}$ (Mbps)]{
        \label{fig:ee:b}
        \includegraphics[width=0.9\columnwidth]{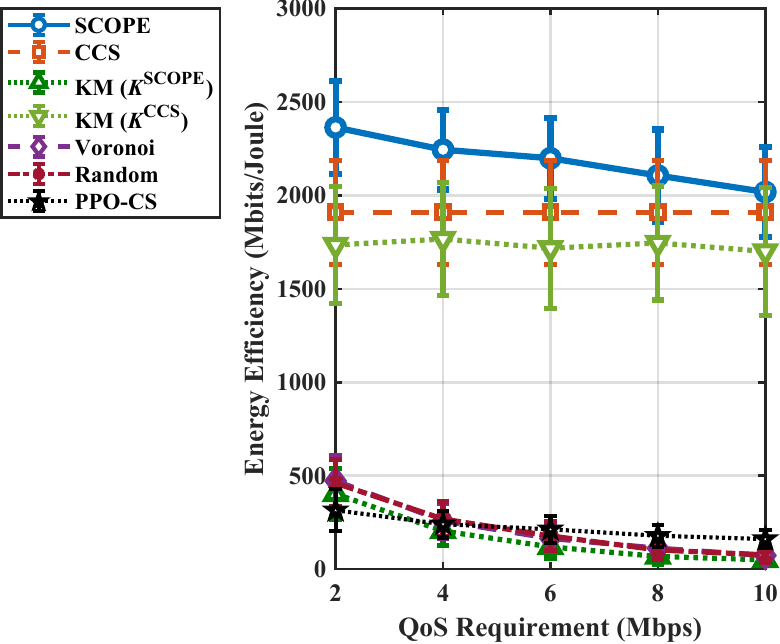}
    }
    \caption{Performance results in terms of energy efficiency $\eta_{\text{EE}}$ under varying
    \subref{fig:ee:a} Number of users $N$ and
    \subref{fig:ee:b} QoS requirement $R_{\min}$ (Mbps).}%
    \label{fig:ee}
\end{figure}

\subsection{Sensitivity Analysis}
\label{sec:sensitivity}
In this subsection, we evaluate the robustness of the SCOPE framework against varying hardware constraints and regulatory requirements. To isolate the source of performance gains as suggested by the reviewers, we conduct a controlled analysis by varying a single system parameter while maintaining a fixed user topology across 100 independent Monte Carlo runs.

\begin{figure}[!t]
    \centering
    \subfigure[Individual backhaul capacity $\gamma_{\max}$ (Mbps)]{
        \label{fig:sensitivity:a}
        \includegraphics[width=0.9\columnwidth]{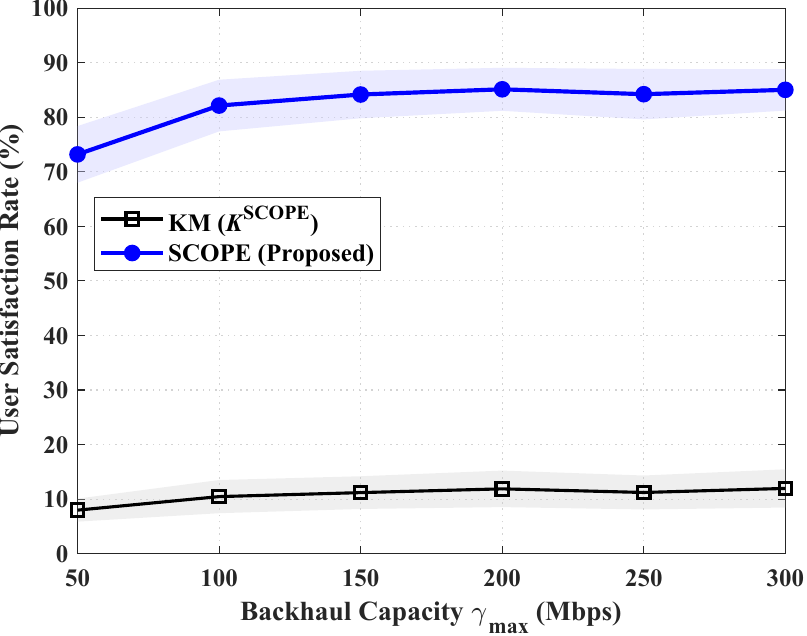}
    }\\
    \subfigure[Minimum altitude constraint $h_{\min}$ (m)]{
        \label{fig:sensitivity:b}
        \includegraphics[width=0.9\columnwidth]{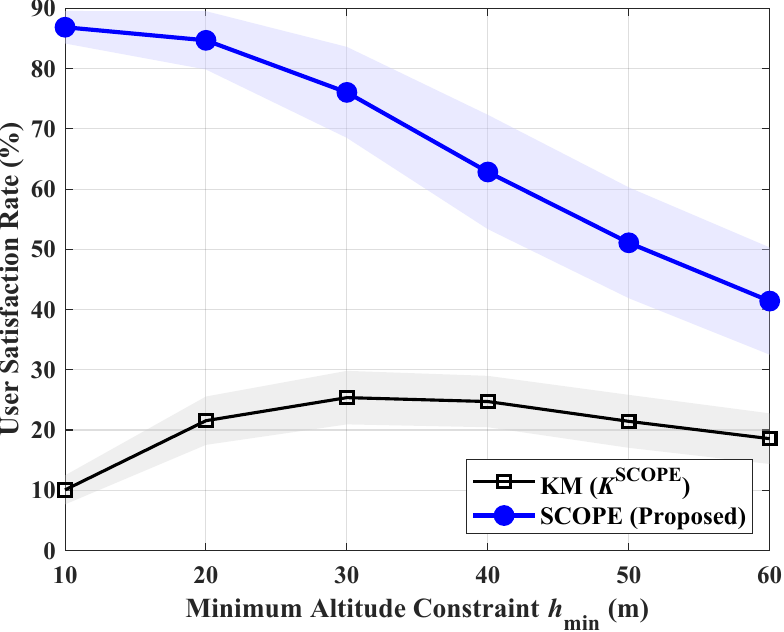}
    }
    \caption{System performance in terms of User Satisfaction Rate (USR) under varying 
        \subref{fig:sensitivity:a} individual backhaul capacity $\gamma_{\max}$ (Mbps) and 
        \subref{fig:sensitivity:b} minimum altitude constraint $h_{\min}$ (m).}%
    \label{fig:sensitivity}
\end{figure}

\subsubsection{Impact of Backhaul Capacity}
Fig.~\ref{fig:sensitivity:a} compares the performance of SCOPE against the KM ($K^{\text{SCOPE}}$) under varying individual UAV backhaul capacities, denoted as $\gamma_{\max}$, ranging from 50 Mbps to 300 Mbps. To maintain analytical focus and ensure a rigorous evaluation, other traditional heuristics and learning-based baselines are omitted from this specific sensitivity analysis. Specifically, traditional geometric methods lack native mathematical mechanisms to dynamically adapt their spatial placement to varying backhaul limits, making direct comparisons uninformative. Furthermore, learning-based approaches would require prohibitive retraining for each distinct capacity threshold to function properly. Consequently, the enhanced KM ($K^{\text{SCOPE}}$) serves as the most competitive and robust baseline. To ensure a strictly fair comparison, the KM ($K^{\text{SCOPE}}$) is configured to utilize the exact same number of UAV-BSs as dynamically determined by SCOPE for each simulation trial. The shaded regions represent the standard deviation across all trials.

The results demonstrate that SCOPE consistently maintains a higher USR across all backhaul conditions. While the performance of both algorithms improves as $\gamma_{\max}$ increases, KM ($K^{\text{SCOPE}}$) remains significantly less efficient, especially in bandwidth-constrained scenarios below 150 Mbps. This performance gap is attributed to the fact that KM ($K^{\text{SCOPE}}$) is a QoS-blind geometric partitioning method, which often assigns an excessive number of users to a single UAV regardless of backhaul limits. In contrast, SCOPE's capacity-aware pruning mechanism ensures that each UAV serves a feasible number of users to satisfy the tri-constraint requirements. The narrow standard deviation of SCOPE further confirms its deterministic stability, proving that the proposed framework maximizes network utility through intelligent resource-aware placement rather than a mere increase in hardware resources.

\subsubsection{Impact of Minimum Altitude Constraints}
Fig.~\ref{fig:sensitivity:b} evaluates the satisfaction rate sensitivity toward the minimum altitude constraint, $h_{\min}$, which represents a critical operational floor often dictated by urban obstacles or aviation safety regulations. To ensure a fair comparison, the KM ($K^{\text{SCOPE}}$) is enhanced to reverse-calculate its altitude based on the maximum cluster radius and is restricted to the same UAV budget as SCOPE. Note that traditional 2D heuristics (e.g., Voronoi and CCS) are excluded from this specific altitude sensitivity analysis, as they lack native mathematical mechanisms to dynamically adapt their vertical positioning to varying $h_{\min}$ constraints. This highlights SCOPE's distinct advantage in maintaining robust 3D coverage under stringent physical limitations.

The results indicate that as $h_{\min}$ increases from 10 m to 60 m, both algorithms exhibit a decline in USR due to the unavoidable physical path loss and the forced expansion of the minimum geometric footprint. However, SCOPE demonstrates remarkable resilience compared to the KM ($K^{\text{SCOPE}}$). When $h_{\min}$ reaches 60 m, the performance of KM ($K^{\text{SCOPE}}$) drops to approximately 20\% because it lacks the capacity to prune users who cannot satisfy the link budget at such altitudes. In contrast, SCOPE maintains a functional USR of approximately 40\%. This two-fold advantage is achieved through the top-down shrinking strategy, which strategically identifies and removes boundary users to protect the connectivity quality of the remaining cluster members. This evaluation demonstrates that SCOPE serves as a highly resilient solution for establishing reliable connectivity in highly obstructed urban topographies where low-altitude deployment is physically restricted.

\subsection{Computational Latency and Complexity Analysis}

A critical advantage of the proposed SCOPE framework over learning-based approaches is its exceptional responsiveness in zero-day scenarios. Table~\ref{tab:latency} summarizes the average execution time required to generate a complete stationary 3D deployment solution for a snapshot of $N=600$ users, calculated over 100 independent Monte Carlo trials.

As demonstrated in Table~\ref{tab:latency}, SCOPE generates a feasible deployment solution in approximately 185.35 ms on a standard single-core CPU. Although this latency represents a marginal increase compared to simpler heuristics such as CCS (65.55 ms), K-Means (4.35 ms), or Voronoi partitioning (5.43 ms), it remains well within the strict requirements for real-time emergency response. The computational overhead of SCOPE is a justified and minimal trade-off for superior connectivity performance. This delay is primarily attributed to the proposed top-down shrinking strategy, which iteratively prunes the user topology from the perimeter inward to ensure that the tri-constraint feasibility is strictly satisfied. Compared to QoS-blind methods, SCOPE ensures that a robust digital lifeline is established at the cost of less than 0.2 seconds of computation.

In contrast, DRL methods face a fundamental latency wall in disaster recovery. The PPO-CS baseline highlights a critical performance-latency dilemma; while the inference time of the model is exceptionally fast at 0.41 ms, the resulting deployment is often functionally unreliable due to the total lack of environmental priors. To achieve performance comparable to SCOPE, a DRL model must undergo an extensive training phase requiring over $10^6$ ms (approximately 17 minutes) of high-end GPU computation. In a zero-day disaster scenario where the user topology is unknown and deployment must be instantaneous, such a training overhead is prohibitive. 

Furthermore, in contrast to DRL models that approach environmental optimization as a black-box problem, SCOPE utilizes deterministic geometric priors; specifically, it leverages convex hull boundaries to prune the network topology. This inherent geometric awareness ensures feasibility under strict QoS constraints, thereby preventing the significant connection failures observed in Section~\ref{sec:usr_analysis}. In this context, the 185.35 ms execution time represents a justified trade-off for ensuring deterministic feasibility and stable connectivity under stringent mission constraints.

Beyond computational efficiency, this execution latency facilitates the effective management of user mobility. By achieving such high-frequency recalculation, the framework effectively ``freezes'' the network topology during each optimization window. For instance, given a ground user moving at a typical pedestrian speed of 1.2 m/s, the total spatial displacement during the 185.35 ms calculation period is approximately 0.22 m. This displacement remains negligible compared to the 120 m operational altitude and the resulting multi-meter coverage radius. Therefore, SCOPE can model the dynamic environment as a sequence of high-fidelity stationary snapshots without suffering from stale-state inaccuracies. SCOPE effectively bypasses the training limitations of DRL by providing a high-fidelity deployment solution in less than a quarter of a second; this is achieved without requiring prior environmental knowledge or specialized hardware acceleration.

\begin{table}[!t]
    %\color{blue}
\caption{Average Execution Time Comparison ($N=600$)}
\label{tab:latency}
\centering
\begin{tabular}{|l|c|c|}
\hline
\textbf{Algorithm} & \textbf{Time (ms)} & \textbf{Hardware Requirement} \\
\hline
\hline
\textbf{SCOPE (Proposed)} & \textbf{185.35 ms} & \textbf{CPU (Single Core)} \\
\hline
CCS (Spiral Search) & 65.55 ms & CPU (Single Core) \\
\hline
K-Means Clustering & 4.35 ms & CPU (Single Core) \\
\hline
Voronoi Partitioning & 5.43 ms & CPU (Single Core) \\
\hline
MINLP (Optimal Solver) & $> 10,000$ ms & CPU (Multi-Core) \\
\hline
PPO-CS (Inference) & 0.41 ms & CPU / NPU \\
\hline
PPO-CS (Training) & $> 10^6$ ms & High-end GPU \\
\hline
\end{tabular}
\end{table}

\subsection{Ablation Analysis of Design Components}
\label{sec:ablation}

To explicitly demonstrate the individual contribution of each core component within the SCOPE framework, we analyze the performance gains by mapping our specific architectural choices to the observed failure modes of the selected baselines. This methodology allows the existing baselines to serve as functional ablation studies by isolating specific algorithmic mechanisms.

\begin{enumerate}
    \item \textbf{Impact of Perimeter-First Strategy:} The necessity of the perimeter extraction mechanism is isolated by comparing SCOPE with the KM ($K^{\text{SCOPE}}$) baseline. Both algorithms operate under the exact same UAV-BS count to ensure a matched hardware budget. However, K-Means relies on centroid-based partitioning, which inherently gravitates toward high-density user centers and neglects sparse boundary users. As demonstrated in Fig.~\ref{fig:satisfaction:a}, SCOPE consistently maintains a significantly higher satisfaction rate by growing clusters from the perimeter inward. This outside-in approach effectively eliminates the service gaps and edge voids that are prevalent in centroid-first heuristics.
    
    \item \textbf{Impact of SEC-driven 3D Altitude Optimization:} The contribution of joint 3D coordinate and altitude optimization is evaluated by comparing SCOPE against the CCS heuristic. While CCS operates with a fixed altitude and a static coverage radius, SCOPE employs Welzl's algorithm to compute the SEC and adaptively selects the optimal altitude $h^*$. The performance gap observed in Fig.~\ref{fig:ee:b} indicates that fixed-parameter methods fail to manage the critical trade-off between interference footprints and link quality. SCOPE's dynamic selection of $h^*$ and $R$ ensures that SINR requirements are satisfied across diverse user clusters, proving that adaptive 3D placement is mandatory for QoS-guaranteed services.

    \item \textbf{Necessity of Tri-Constraint Expansion:} The impact of the constraint-driven expansion mechanism is analyzed by comparing SCOPE with the Voronoi partitioning baseline, which represents an unconstrained geometric partition. As shown in the fairness analysis in Section~\ref{sec:fairness}, the Voronoi method leads to severe load imbalances and capacity saturation because it lacks internal verification for backhaul limits and link feasibility. Conversely, SCOPE's expansion logic incorporates real-time checks for the tri-constraint boundary. This mechanism dynamically limits cluster membership to prevent link congestion, ensuring a highly equitable and feasible distribution of network resources even in stochastic environments.
\end{enumerate}

This ablation analysis confirms that the superior performance of SCOPE is not derived from a single isolated mechanism, but from the synergistic integration of perimeter-first targeting, adaptive 3D optimization, and constraint-aware cluster expansion.

\section{Conclusion}
\label{sec:conclusion}

In this paper, we addressed the fundamental challenge of rapid UAV-BS deployment in heterogeneous post-disaster environments. We proposed the SCOPE framework, which is a deterministic and training-free architecture that synergistically integrates perimeter extraction with the Welzl Smallest Enclosing Circle (SEC) algorithm. Unlike learning-based models that require significant environmental priors and high-end hardware, SCOPE provides a high-fidelity 3D deployment solution in polynomial time. Theoretically, we proved the convergence of the framework and derived a worst-case complexity of $O(N^2 \log N)$, ensuring predictable execution for real-time emergency missions.

Extensive Monte Carlo simulations confirmed the effectiveness of SCOPE across multiple performance dimensions. Under a strictly matched hardware budget, SCOPE achieved a robust and stable user satisfaction rate in high-density scenarios while maintaining sub-second execution latency on standard processing hardware. This level of responsiveness allows for the instantaneous establishment of digital lifelines without the prohibitive training wall associated with DRL methods. Notably, our sensitivity analysis revealed that SCOPE maintains functional service levels even under highly restrictive altitude constraints; in contrast, conventional QoS-blind heuristics suffer a significant performance degradation in such regimes. Furthermore, the tri-constraint pruning mechanism demonstrated a significant load-balancing advantage, yielding a more equitable distribution of network resources even under extreme bottlenecks.

In summary, SCOPE stands out as a robust and agile solution for zero-day disaster response where deployment must be instantaneous and physical constraints are severe. Future research will focus on integrating dynamic trajectory optimization and advanced multi-cell interference coordination to accommodate long-term user mobility and complex energy-efficient flight profiles while maintaining the deterministic reliability established in this framework.

\color{black}

% \section{Conclusion}
% \label{sec:conclusion}

% In this paper, we addressed the challenge of deploying UAV-BSs in environments with highly heterogeneous ground user densities. We proposed a novel 3D deployment framework, SCOPE, which integrates perimeter extraction with the Smallest Enclosing Circle (SEC) algorithm. Unlike traditional counter-clockwise spiral or Voronoi-based methods that rely on fixed-altitude assumptions, SCOPE dynamically adjusts the service radius and altitude of each UAV to fit the irregular shape of user clusters.

% Simulation results using realistic channel models demonstrate that SCOPE effectively overcomes the limitations of existing baselines. In high-density scenarios ($N=1000$), SCOPE achieves up to a twofold increase in user satisfaction compared to the Spiral algorithm. Moreover, extensive analysis confirms that SCOPE offers superior fairness and energy efficiency, making it a robust solution for on-demand 5G/B5G aerial networks. Future work will consider the impact of UAV trajectory planning and dynamic user mobility on the proposed framework.

% --- Put this after Conclusion, before References ---

\section*{Acknowledgment}
The author would like to thank Tzu-Min~Pan, Chang-Lin~Fan, Lun-Hao~Hsu, and Bo-Rui~Chen for their valuable help with the implementation of the baseline algorithms and data collection.

% ----------------------------------------------------

\bibliographystyle{IEEEtran}
% argument is your BibTeX string definitions and bibliography database(s)
\bibliography{reference}
\ifCLASSOPTIONcaptionsoff  \newpage \fi 

\vspace{-20pt}
\begin{IEEEbiography}[{\includegraphics[width=1in,height=1.25in,clip,keepaspectratio]{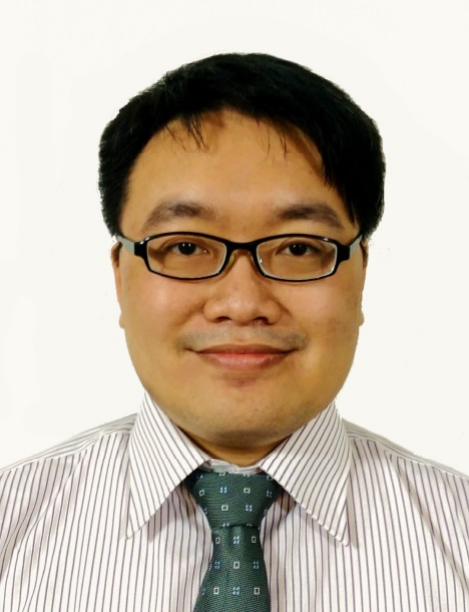}}]{Chuan-Chi Lai}
    (Member, IEEE) received the Ph.D. degree in Computer Science and Information Engineering from National Taipei University of Technology, Taipei, Taiwan, in 2017. He was a postdoctoral research fellow (2017--2019) and contract assistant research fellow (2020) with the Department of Electrical and Computer Engineering, National Chiao Tung University, Hsinchu, Taiwan. From Feb. 2021 to Jan. 2024, he served as an assistant professor at the Department of Information Engineering and Computer Science, Feng Chia University, Taichung, Taiwan. From Feb. 2024, he is currently an assistant professor at the Department of Communications Engineering, National Chung Cheng University, Chiayi, Taiwan. He is a member of the IEEE Vehicular Technology Society and the IEEE Communications Society. His research interests include resource allocation, data management, information dissemination, and distributed query processing for moving objects in emerging applications such as the Internet of Things, edge computing, and next-generation wireless networks. Dr. Lai received the Postdoctoral Researcher Academic Research Award from the Ministry of Science and Technology, Taiwan, in 2019, Best Paper Awards at WOCC 2021 and WOCC 2018, and the Excellent Paper Award at ICUFN 2015.
\end{IEEEbiography}

\end{document}